\theoremstyle{remark}
\newcommand {\apgt} {\ {\raise-.5ex\hbox{$\buildrel>\over\sim$}}\ }
\newcommand {\aplt} {\ {\raise-.5ex\hbox{$\buildrel<\over\sim$}}\ }
\theoremstyle{plain}
\newtheorem{theorem}{\sc Theorem}[section]
\newtheorem{lemma}{\sc Lemma}[section]
\newtheorem{corollary}{\sc Corollary}[section]
\def\bX {{\bf X}}
\def\bR {{\bf R}}
\def\bW {{\bf W}}
\def\bR {{\bf R}}
\def\bK {{\bf K}}
\def\bR {{\bf R}}
\def\zerov{\mathbf 0}
\def\arg\min{\mathop{\mbox{\rm arg min}}}
\def\cite{\citeasnoun}
\def\beqr{\begin{eqnarray}}             
\def\eeqr{\end{eqnarray}}               
\def\beq{\begin{equation}}             
\def\eeq{\end{equation}}               
\def\bc{\begin{center}}                 
\def\ec{\end{center}}                   
\def\bfh{\mathcal{H}}
\numberwithin{equation}{section}
\theoremstyle{plain}
\renewcommand{\thebibliography}[1]{
\setcounter{secnumdepth}{0}
\bigskip\bigskip \centerline{LITERATURE CITED}
\list
  {[\arabic{enumi}]}{\settowidth\labelwidth{[#1]}\leftmargin\labelwidth
    \advance\leftmargin\labelsep
    }
    \def\newblock{\hskip .11em plus .33em minus -.07em}
    \sloppy\clubpenalty4000\widowpenalty4000
    \sfcode`\.=1000\relax}
\def\Xv{\mathbf X}
\def\xv{\mathbf x}
\def\Gv{\mathbf G}
\def\yv{\mathbf y}
\def\zerov{\mathbf{0}}
\def\bv{\mathbf b}
\def\bfh{{\mathcal{H}}}
\def\beq2{\begin{equation*}}
\def\eeq2{\end{equation*}}
\def\beqr{\begin{eqnarray}}             
\def\eeqr{\end{eqnarray}}               
\def\bc{\begin{center}}                 
\def\ec{\end{center}}                   
\def\smt{{\mbox{\tiny T}}}
\def\mb{\mbox}                          
\def\smn{\sum_{i=1}^n}                  
\def\itg{\int_0^1}
\def\eps{\epsilon}
\def\la{\lambda}
\def\bfh{\mathcal H}
\def\Lc{\mathcal L}
\def\bv{\mathbf b}
\def\cv{\mathbf c}
\def\fv{\mathbf f}
\def\sv{\mathbf s}
\def\wv{\mathbf w}
\def\xv{\mathbf x}
\def\yv{\mathbf y}
\def\zv{\mathbf z}
\def\vv{\mathbf v}
\def\Gv{\mathbf G}
\def\Xv{\mathbf X}
\def\Av{\mathbf A}
\def\Bv{\mathbf B}
\def\zerov{\mathbf 0}
\def\btv{\boldsymbol{\beta}}
\def\epsv{\boldsymbol{\epsilon}}
\begin{document}
\begin{frontmatter}
\title{Sparse and Efficient Estimation for Partial Spline Models
with Increasing Dimension} \runtitle{Sparse Partial Spline}

\author{Guang Cheng,\thanksref{t1}}
\author{Hao Helen Zhang\thanksref{t2}}
\and
\author{Zuofeng Shang\thanksref{t3}}
\runauthor{Guang Cheng, Hao Helen Zhang and Zuofeng Shang}

\affiliation{Purdue University and University of Arizona}

\thankstext{t1}{Guang Cheng (Corresponding Author) is Associate Professor,
Department of Statistics, Purdue University, West Lafayette, IN
47907-2066, Email: chengg@purdue.edu. Supported by NSF Grant DMS-0906497 and CAREER Award DMS-1151692.}
\thankstext{t2}{Hao Helen Zhang is
Associate Professor, Department of Mathematics, University of Arizona, Tucson, AZ 85721-0089, Email: hzhang@math.arizona.edu. Supported
by NSF grants DMS-0645293, DMS-1347844, NIH grants P01 CA142538 and R01 CA085848.}
\thankstext{t3}{Zuofeng Shang is Visiting Assistant Professor, Department of Statistics, Purdue University, West Lafayette, IN 47907-2066, Email: shang9@purdue.edu.}

\begin{abstract}
We consider model selection and estimation for
partial spline models and propose a new regularization method in
the context of smoothing splines. The regularization method has a simple
yet elegant form, consisting of roughness penalty
on the nonparametric component and shrinkage penalty on the parametric
components, which can achieve function smoothing and sparse
estimation simultaneously. We establish the convergence rate and
oracle properties of the estimator under weak regularity
conditions. Remarkably, the estimated
parametric components are sparse and efficient, and the nonparametric component can be estimated with
the optimal rate. The procedure also has attractive computational
properties. Using the representer theory of smoothing splines, we
reformulate the objective function as a LASSO-type problem,
enabling us to use the LARS algorithm to
compute the solution path. We then extend the procedure to
situations when the number of predictors increases with the sample
size and investigate its asymptotic properties in that context.
Finite-sample performance is illustrated by simulations.\\
\end{abstract}

\begin{keyword}
\kwd{Smoothing splines} \kwd{Semiparametric models} \kwd{RKHS}
\kwd{High dimensionality} \kwd{Solution path} \kwd{Oracle
property} \kwd{Shrinkage methods}
\end{keyword}

\end{frontmatter}

\vspace{-0.3in}{\bf Short title: Sparse Partial Spline}

\counterwithout{equation}{section}
\counterwithout{theorem}{section}
\counterwithout{lemma}{section}
\counterwithout{corollary}{section}
\vspace{-0.1in}\section{Introduction}\label{sec:intro} \vskip
-0.7cm
\subsection{Background} \vskip -0.7cm Partial smoothing splines
are an important class of semiparametric regression models.
Developed in a framework of reproducing
kernel Hilbert spaces (RKHS), these models provide a
compromise between linear and nonparametric models.

In general, a partial smoothing spline model assumes the data
$(\bX_i, T_i, Y_i)$ follow
\begin{equation}\label{plm}
Y_i=\bX_i'\btv + f(T_i) + \eps_i, \quad i=1,\cdots, n, \quad f\in
W_m[0,1],
\end{equation}
where $\bX_{i}\in R^{d}$ are linear covariates, $T_{i}\in[0,1]$ is
the nonlinear covariate, and $\eps_i$'s are independent errors
with mean zero and variance $\sigma^2$. The space $W_m[0,1]$ is
the $m^{\text{th}}$ order Sobolev Hilbert space $W_m[0,1]=\{f: f,
f^{(1)}, ..., f^{(m-1)}~ \mb{are absolutely
continuous},~f^{(m)}\in\Lc_2[0,1]\}$ for $m\geq 1$. Here $f^{(j)}$
denotes the $j$th derivative of $f$. The function $f(t)$ is the nonparametric component of the model. Denote the
observations of $(\Xv_i, T_i, Y_i)$ as $(\xv_i,t_i,y_i)$ for
$i=1,2,\ldots,n$. The standard approach to compute the partial
spline (PS) estimator is minimizing the penalized least squares:
\begin{equation}\label{par0}
(\widetilde{\boldsymbol{\beta}}_{PS}, \tilde{f}_{PS})=
\arg\min_{\btv\in R^d,f\in
W_m}\frac{1}{n}\smn\left[y_i-\xv_i^\smt\btv-f(t_i)\right]^2+
\la_1J_f^2,
\end{equation}
where $\lambda_1$ is a smoothing parameter and $J_f^2=\itg\left[f^{(m)}(t)\right]^2dt$ is the roughness penalty
on $f$; see \citet{kw71,cw79,d84,gs94} for details. It is known
that the solution $\tilde{f}_{PS}$ is a natural spline (\citet{w90}) of order
$2m-1$ on $[0,1]$ with knots at $t_i, i=1, \cdots, n$. Asymptotic theory for partial splines has been developed by
several authors \citet{SC13, r86,h86,s88,sw88}. In this paper, we
mainly consider partial smoothing splines in the framework of
\citet{w84}.

\vspace{-0.6cm} \subsection{ Model Selection for Partial Splines}
\vskip -0.7cm Variable selection is important for data analysis and model building, especially for high dimensional
data, as it helps to improve the model's prediction accuracy and interpretability. For linear
models, various penalization procedures have been proposed to obtain a sparse model, including the non-negative garrote
\citet{b95}, LASSO \citet{t96}, SCAD \citet{fl01,fp04}, and the
adaptive LASSO \citet{z06,wlj07}. Contemporary research frequently
deals with problems where the input dimension $d$ diverges to
infinity as the data sample size increases \citet{fp04}. There is
also active research going on for linear model selection in these
situations \citet{fp04, zz08, fl08,hhm08,hmz08}.

In this paper, we propose and study a new approach to variable selection for partially linear models in the
framework of smoothing splines. The procedure leads to a regularization problem in the RKHS, whose unified
formulation can facilitate numerical computation and asymptotic inferences of the estimator.
To conduct variable selection, we employ the adaptive LASSO penalty on linear parameters.
One advantage of this procedure is its easy implementation. We
show that, by using the representer theory (\citet{w90}), the optimization
problem can be reformulated as a LASSO-type problem so that the
entire solution path can be computed by the LARS
algorithm \citet{ehjt04}. We show that the new procedure can
asymptotically (i) correctly identify the sparse model structure;
(ii) estimate the nonzero $\beta_{j}$'s consistently and achieve
the semiparametric efficiency; (iii) estimate the nonparametric
component $f$ at the optimal nonparametric rate. We also
investigate the property of the new procedure with a diverging
number of predictors \citet{fp04}.

From now on, we regard $(Y_i,\bX_i)$ as i.i.d realizations from
some probability distribution. We assume
that the $\xv_{i}$'s belong to some compact subset in $R^d$, and
they are standardized such that $\smn x_{ij}/n = 0$ and $\smn
x_{ij}^2/n=1$ for $j=1,\cdots,d$, where
$\xv_i=(x_{i1},\ldots,x_{id})'$. Also assume $t_i\in [0,1]$ for
all $i$. Throughout the paper, we use the convention that $0/0=0$.
The rest of the article is organized as follows. Section 2
introduces our new double-penalty estimation procedure for partial
spline models. Section 3 is devoted to two main theoretical
results. We first establish the convergence rates and oracle
properties of the estimators in the standard situation with a
fixed $d$, and then extend these results to the situations when $d$
diverges with the sample size $n$. Section 4 gives the
computational algorithm. In particular, we show how to compute the
solution path using the LARS algorithm. The issue of parameter
tuning is also discussed. Section 5 illustrates the performance of
the procedure via simulations and real examples. Discussions and technical proofs are presented in Section 6 and 7.

\vspace{-0.6cm} \section{Method}\label{sec:pspline} \vskip -0.6cm
We assume that $0\le t_1<t_2<\cdots<t_n\le 1$. In order to achieve a smooth estimate for the nonparametric component and sparse estimates for the
parametric components simultaneously, we consider the following
regularization problem:
\begin{equation}\label{par1}
\min_{\btv\in R^d,f\in W_m}\frac{1}{n}
\smn\left[y_i-\xv_i^\smt\btv-f(t_i)\right]^2+\la_1\itg\left[f^{(m)}(t)\right]^2dt
+\la_2\sum_{j=1}^dw_j|\beta_j|.
\end{equation}
The penalty term in
\eqref{par1} is naturally formed as a combination of roughness
penalty on $f$ and the weighted LASSO penalty on $\btv$. Here, $\lambda_1$ controls the smoothness of
the estimated nonlinear function while $\la_2$ controls the
degree of shrinkage on $\beta$'s. The weight $w_j$'s are
pre-specified. For
convenience, we will refer to this procedure as PSA (the Partial
Splines with Adaptive penalty).

Note that $w_j$'s should be adaptively chosen such
that they take large values for unimportant covariates and small
values for important covariates. In particular, we propose using
$w_j=1/|\tilde{\beta}_j|^\gamma$, where
$\widetilde{\btv}=(\tilde{\beta}_1,\cdots,\tilde{\beta}_d)'$ is
some consistent estimate for $\btv$ in the model \eqref{plm}, and
$\gamma$ is a fixed positive constant. For example, the standard
partial smoothing spline $\widetilde{\boldsymbol{\beta}}_{PS}$ can
be used to construct the weights. Therefore, we get the following
optimization problem:
\begin{equation}\label{par2}
(\widehat{\btv}_{PSA},\hat{f}_{PSA})=\arg\min_{\btv\in R^d,f\in
W_m}\frac{1}{n}\sum_{i=1}^{n}\left[y_{i}-\xv_{i}'\btv-f(t_{i})
\right]^{2}+ \lambda_1\int_{0}^{1}\left[f^{(m)}(t)\right]^{2}dt+
\lambda_2\sum_{j=1}^{d}\frac{|\beta_{j}|}{|\tilde{\beta}_{j}|
^{\gamma}}.
\end{equation}

\vspace{-.1in}When $\btv$ is fixed, the standard smoothing spline
theory suggests that the solution to \eqref{par2} is linear in the
residual $(\yv-\Xv\btv)$, i.e.
$\hat{\fv}(\btv)=A(\lambda_1)(\yv-\Xv\btv),$ where
$\yv=(y_1,\ldots,y_n)'$, $\Xv=(\xv_1,\ldots,\xv_n)'$ and the
matrix $A(\lambda_1)$ is the smoother or influence matrix
\citet{w84}. The expression of $A(\lambda_1)$ will be given in
Section 4. Plugging $\hat{\fv}(\btv)$ into (\ref{par2}), we can
obtain an equivalent objective function for $\btv$:
\begin{equation}
Q(\btv)=\frac{1}{n}(\yv-\Xv\btv)'[I-A(\lambda_1)](\yv-\Xv\btv)+\lambda_2\sum_{j=1}^{d}\frac{|\beta_{j}|}{|\tilde{\beta}_{j}|^{\gamma}},
\label{par3}
\end{equation}
where $I$ is the identity matrix of size $n$.
The PSA
solution can be computed as: \vspace{-.05in}\begin{eqnarray*}
\widehat{\boldsymbol{\beta}}_{PSA}&=&\arg\min_{\btv} Q(\btv), \\
 \hat{f}_{PSA}&=&A(\lambda_1)(\yv-\Xv\widehat{\btv}_{PSA}).
\vspace{-.05in}\end{eqnarray*} Special software like Quadratic
Programming (QP) or LARS \citet{ehjt04} is needed to obtain the
solution.

\vspace{-0.6cm} \section{Statistical Theory}\label{sec:theory}
\vskip -0.6cm We can write the true
coefficient vector as
$\btv_{0}=(\beta_{01},\cdots,\beta_{0d})'=(\btv'_1,\btv'_2)'$,
where $\btv_1$ consists of all $q$ nonzero components and $\btv_2$
consists of the rest $(d-q)$ zero elements, and write the true
function of $f$ as $f_0$. We also write the estimated vector
$\widehat{\btv}_{PSA}=(\hat{\beta}_1,...,\hat{\beta}_d)=
\left(\widehat{\btv}'_{PSA,1},\widehat{\btv}'_{PSA,2}\right)'$. In addition, assume that $\Xv_i$ has zero mean and strictly positive
definite covariance matrix $\bR$. The observations $t_{i}$'s satisfy
\begin{eqnarray}
\int_{0}^{t_{i}}u(w)dw=i/n\;\;\;\mbox{for
i=1,\ldots,n},\label{tcond}
\end{eqnarray}
where $u(\cdot)$ is a continuous and strictly positive function
independent of $n$.

\vspace{-0.6cm} \subsection{Asymptotic Results for Fixed $d$}

\vskip -0.6cm We show that, for any fixed $\gamma>0$, if $\lambda_1$
and $\lambda_2$ converge to zero at proper rates, then both the
parametric and nonparametric components can be estimated at their
optimal rates. Moreover, our estimation procedure produces the
nonparametric estimate $\hat{f}_{PSA}$ with desired smoothness,
i.e. (\ref{jrate}). Meanwhile, we conclude that our
double penalization procedure can estimate the nonparametric
function well enough to achieve the oracle properties of the
weighted Lasso estimates.

In the below we use $\|\cdot\|$,
$\|\cdot\|_{2}$ to represent the Euclidean norm, $L_{2}$- norm,
and use $\|\cdot\|_{n}$ to denote the empirical $L_{2}$-norm, i.e.
$\|F\|_{n}^{2}=\sum_{i=1}^{n}F^{2}(s_{i})/n$.

We derive our convergence rate results under the following
regularity conditions: \vspace{-0.3cm} \begin{itemize} \item [R1.]
$\epsilon$ is assumed to be independent of $X$, and has a
sub-exponential tail, i.e. $E(\exp(|\epsilon|/C_{0}))\leq C_{0}$
for some $0<C_{0}<\infty$, see \citet{mvg97}; \item [R2.]
$\sum_{k}\phi_{k}\phi_{k}'/n$ converges to some non-singular
matrix with
$\phi_{k}=[1,t_{k},\cdots,t_{k}^{m-1},x_{k1},\cdots,x_{kd}]'$ in
probability.
\end{itemize}

\begin{theorem}\label{brate}
Consider the minimization problem \eqref{par2}, where $\gamma>0$
is a fixed constant. Assume the initial estimate $\tilde{\btv}$ is
consistent. If
$n^{2m/(2m+1)}\lambda_{1}\rightarrow\lambda_{10}>0$,
$\sqrt{n}\lambda_{2}\rightarrow 0$ and
\begin{eqnarray}
\frac{n^{\frac{2m-1}{2(2m+1)}}
\lambda_{2}}{|\tilde{\beta}_{j}|^{\gamma}}
\overset{P}{\longrightarrow} \lambda_{20}> 0\;\;\;\;\;\mbox{for}\;j=q+1,\ldots,d\label{smooth}
\end{eqnarray}
as $n\rightarrow\infty$, then we have
\vspace{-.3cm}\begin{enumerate} \item there exists a local
minimizer $\widehat{\btv}_{PSA}$ of (\ref{par2}) such that
\begin{equation}
\|\widehat{\btv}_{PSA}-\btv_{0}\|=O_{P}(n^{-1/2}).\label{betarate}
\end{equation}
\item the nonparametric estimate $\hat{f}_{PSA}$ satisfies
\begin{eqnarray}
\|\hat{f}_{PSA}-f_{0}\|_{n}&=&O_{P}(\lambda_{1}^{1/2}),\label{fratee}\\
J_{\hat{f}_{PSA}}&=&O_{P}(1).\label{jrate}
\end{eqnarray}
\item the local minimizer
$\widehat{\btv}_{PSA}=(\widehat{\btv}_{PSA,1}',\widehat{\btv}_{PSA,2}')'$
satisfies

\begin{enumerate}
\item[(a)] Sparsity:
$P(\widehat{\btv}_{PSA,2}=\zerov)\rightarrow 1$.

\item[(b)] Asymptotic Normality:
$$\sqrt{n}(\widehat{\btv}_{PSA,1}-\btv_1)\overset{d}{\rightarrow}
N(\zerov,\sigma^{2}\bR_{11}^{-1}),$$
where $\bR_{11}$ is the $q\times q$ upper-left sub matrix of
covariance matrix of $\Xv_i$.
\end{enumerate}
\end{enumerate}
\end{theorem}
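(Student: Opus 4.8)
The plan is to profile out $f$ and then run the standard consistency-then-oracle argument of \citet{fl01}. Since for each fixed $\btv$ the inner minimization over $f\in W_m$ in \eqref{par2} is an ordinary smoothing spline with fitted vector $A(\lambda_1)(\yv-\Xv\btv)$, a (local) minimizer of \eqref{par2} is exactly a (local) minimizer of the reduced criterion $Q$ in \eqref{par3}; I work throughout with $Q$, writing $B=I-A(\lambda_1)$ and $\fv_0=(f_0(t_1),\dots,f_0(t_n))'$. Everything rests on three facts about $B$ under \eqref{tcond}, condition R2, and the standing assumptions: (i) the smoothing-spline identity $n^{-1}\mathbf z'B\mathbf z=\min_{g\in W_m}\{n^{-1}\|\mathbf z-\mathbf g\|^2+\lambda_1 J_g^2\}$, so in particular $n^{-1}\fv_0'B\fv_0\le\lambda_1 J_{f_0}^2$ and $n^{-1}\mathbf g'B\mathbf g\le\lambda_1 J_g^2$ for the value vector of any $g\in W_m$, together with $\mbox{tr}(A(\lambda_1))\asymp\lambda_1^{-1/(2m)}=o(n)$; (ii) because the $t_i$ form a fixed regular design while the $\Xv_i$ are i.i.d.\ mean zero, $n^{-1}\Xv'A(\lambda_1)\Xv=O_P(\mbox{tr}(A(\lambda_1))/n)=o_P(1)$, hence $n^{-1}\Xv'B\Xv\overset{P}{\to}\bR$ and $n^{-1}\Xv'B^2\Xv\overset{P}{\to}\bR$; (iii) $n^{-1/2}\Xv'B\epsv=O_P(1)$ with conditional covariance $\sigma^2 n^{-1}\Xv'B^2\Xv\to\sigma^2\bR$, and $n^{-1/2}\Xv'B\fv_0=o_P(1)$ since its conditional variance is $\le\mbox{const}\cdot\lambda_1 J_{f_0}^2$ by (i).

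For \eqref{betarate} it suffices to exhibit, for every $\eta>0$, a constant $C$ with $P\{\inf_{\|\uv\|=C}Q(\btv_0+\uv/\sqrt{n})>Q(\btv_0)\}\ge 1-\eta$, which traps a local minimizer inside $\{\|\btv-\btv_0\|\le C/\sqrt{n}\}$. Expanding, $n[Q(\btv_0+\uv/\sqrt{n})-Q(\btv_0)]=\uv'(n^{-1}\Xv'B\Xv)\uv-2\uv'\,n^{-1/2}\Xv'B(\fv_0+\epsv)+n\lambda_2\sum_j w_j(|\beta_{0j}+u_j/\sqrt{n}|-|\beta_{0j}|)$; by the three facts the quadratic term is $\ge(\lambda_{\min}(\bR)+o_P(1))\|\uv\|^2$, the linear term is $O_P(1)\|\uv\|$, the penalty contribution over $j\le q$ is $\le\sqrt{n}\lambda_2\,\|(w_1,\dots,w_q)\|\,\|\uv\|=o_P(1)$ (using $\sqrt{n}\lambda_2\to0$ and $w_j\to|\beta_{0j}|^{-\gamma}$) and over $j>q$ is nonnegative, so for $C$ large the quadratic term dominates. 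For part 2, $\hat{f}_{PSA}$ is the smoothing-spline fit to $\mathbf r=\fv_0+\epsv-\Xv\boldsymbol{\delta}$ with $\boldsymbol{\delta}=\widehat{\btv}_{PSA}-\btv_0=O_P(n^{-1/2})$, so the basic inequality $n^{-1}\|\mathbf r-\hat{\fv}_{PSA}\|^2+\lambda_1 J_{\hat{f}_{PSA}}^2\le n^{-1}\|\mathbf r-\fv_0\|^2+\lambda_1 J_{f_0}^2$ rearranges to
\[
\|\hat{f}_{PSA}-f_0\|_n^2+\lambda_1 J_{\hat{f}_{PSA}}^2\le\frac{2}{n}\langle\hat{\fv}_{PSA}-\fv_0,\epsv\rangle-\frac{2}{n}\langle\hat{\fv}_{PSA}-\fv_0,\Xv\boldsymbol{\delta}\rangle+\lambda_1 J_{f_0}^2 .
\]
The middle term is $\le\frac14\|\hat{f}_{PSA}-f_0\|_n^2+O_P(n^{-1})$. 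For the first term I invoke the entropy bound $\log N(\delta,\{g\in W_m:J_g\le1\},\|\cdot\|_\infty)=O(\delta^{-1/m})$ for the Sobolev ball; with condition R1 controlling the noise tails this gives, up to lower-order terms and with probability tending to one, $|n^{-1}\langle\hat{\fv}_{PSA}-\fv_0,\epsv\rangle|\le\mbox{const}\cdot n^{-1/2}\|\hat{f}_{PSA}-f_0\|_n^{1-1/(2m)}(1+J_{\hat{f}_{PSA}})^{1/(2m)}$. Splitting this off by Young's inequality at exponent $2/(2m+1)$ and using $n^{-2m/(2m+1)}\asymp\lambda_1$ turns the displayed inequality into $\lambda_1 J_{\hat{f}_{PSA}}^2\le C\lambda_1(1+J_{\hat{f}_{PSA}})^{2/(2m+1)}+O_P(\lambda_1)$, whence $J_{\hat{f}_{PSA}}=O_P(1)$, i.e.\ \eqref{jrate}; feeding this back yields $\|\hat{f}_{PSA}-f_0\|_n^2=O_P(\lambda_1)$, i.e.\ \eqref{fratee}. (Alternatively one may quote the rate and smoothness results for partial smoothing splines in \citet{w84} and \citet{SC13}, the perturbation $\Xv\boldsymbol{\delta}$ being of negligible order.)

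For part 3 I use the Karush--Kuhn--Tucker (subgradient) conditions for $Q$, which are necessary at the local minimizer from part 1. \emph{Sparsity:} on the event $\{\|\widehat{\btv}_{PSA}-\btv_0\|\le C/\sqrt{n}\}$, if $\hat{\beta}_j\ne0$ for some $j>q$ then $|2n^{-1}[\Xv'B(\yv-\Xv\widehat{\btv}_{PSA})]_j|=\lambda_2 w_j$; but, writing $\yv-\Xv\widehat{\btv}_{PSA}=\fv_0+\epsv-\Xv\boldsymbol{\delta}$, the left side equals $2n^{-1}[\Xv'B\epsv]_j+2n^{-1}[\Xv'B\fv_0]_j-2[(n^{-1}\Xv'B\Xv)\boldsymbol{\delta}]_j=O_P(n^{-1/2})$ by the three facts, while $\lambda_2 w_j$ has exact order $n^{-(2m-1)/(2(2m+1))}$ by \eqref{smooth} and $(2m-1)/(2(2m+1))<1/2$; hence the equality holds with probability $\to0$, and a union bound over the finitely many $j>q$ gives (a). \emph{Asymptotic normality:} on the same event $\widehat{\btv}_{PSA,1}$ is eventually an interior minimizer of the $q$-dimensional problem, so $-2n^{-1}\Xv_1'B(\yv-\Xv_1\widehat{\btv}_{PSA,1})+\lambda_2(w_j\,\sign(\hat{\beta}_j))_{j\le q}=\zerov$; substituting $\yv-\Xv_1\widehat{\btv}_{PSA,1}=\fv_0+\epsv-\Xv_1(\widehat{\btv}_{PSA,1}-\btv_1)$ and rearranging,
\[
\big(n^{-1}\Xv_1'B\Xv_1\big)\sqrt{n}(\widehat{\btv}_{PSA,1}-\btv_1)=n^{-1/2}\Xv_1'B\epsv+n^{-1/2}\Xv_1'B\fv_0-\tfrac{\sqrt{n}\,\lambda_2}{2}(w_j\,\sign(\hat{\beta}_j))_{j\le q} .
\]
On the right the first term is asymptotically $N(\zerov,\sigma^2\bR_{11})$ by the Lindeberg--Feller CLT (the $\Xv_i$ are bounded and the smoother matrix has uniformly bounded row sums, while the total conditional variance is of order $n$), the second is $o_P(1)$ by fact (iii), and the third is $o_P(1)$ since $\sqrt{n}\lambda_2\to0$ and $w_j\to|\beta_{0j}|^{-\gamma}$; combining with $n^{-1}\Xv_1'B\Xv_1\to\bR_{11}$ and Slutsky's theorem gives $\sqrt{n}(\widehat{\btv}_{PSA,1}-\btv_1)\overset{d}{\to}N(\zerov,\sigma^2\bR_{11}^{-1})$, which is (b).

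The step I expect to be the main obstacle is part 2: the naive basic inequality only yields $J_{\hat{f}_{PSA}}=O_P(\lambda_1^{-1/2})$, so the crux is the self-normalizing empirical-process bound over the Sobolev class with an a priori unbounded roughness, together with the Young split at exactly exponent $2/(2m+1)$, which is what delivers $J_{\hat{f}_{PSA}}=O_P(1)$ and hence the clean optimal rate \eqref{fratee}. A secondary point needing care is fact (ii) --- that $A(\lambda_1)$ asymptotically annihilates $\Xv$ in the quadratic form $n^{-1}\Xv'A(\lambda_1)\Xv$ --- since this is precisely what makes the limiting information the full $\bR_{11}$ rather than a partialled-out version, and it relies jointly on $\mbox{tr}(A(\lambda_1))=o(n)$ and on the $\Xv_i$ being independent of the fixed $t$-design; everything else is bookkeeping around facts (i)--(iii).
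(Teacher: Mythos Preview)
Your proposal is correct and matches the paper's overall strategy: profile out $f$ via $Q(\btv)$, run the \citet{fl01}-style local consistency argument for Part~1, then derive sparsity and asymptotic normality from subgradient/first-order conditions in Part~3. Your ``three facts'' (i)--(iii) are exactly what the paper extracts from \citet{h86} for Parts~1 and~3 (and later re-proves directly as Lemma~\ref{apple1} in the diverging-$d_n$ analysis), and Parts~1 and~3 are essentially identical in the two write-ups.

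The one genuine divergence is Part~2. The paper works with the joint function $g(x,t)=\xv'\btv+f(t)$ and the basic inequality coming from the full objective \eqref{par2}; this forces it to (a) first show $\|\hat g\|_\infty/(1+J_{\hat g})=O_P(1)$ via a Sobolev decomposition of $g$ that uses condition~R2, (b) apply the Mammen--van de Geer modulus bound over the larger class $\mathcal G=\{\xv'\btv+f(t)\}$, and (c) separately control the cross-term $\lambda_2(J_{\btv_0}-J_{\hat\btv})$ using both the $\sqrt n$-rate from Part~1 and condition~\eqref{smooth}. Your route---treating $\hat f_{PSA}$ as the ordinary smoothing spline fit to the residual $\yv-\Xv\hat\btv$ and writing the basic inequality for that univariate problem---sidesteps all three: the perturbation $\Xv\boldsymbol\delta$ is absorbed by Cauchy--Schwarz at cost $O_P(n^{-1})$, the entropy bound is needed only over the pure Sobolev class in $t$, and the $\lambda_2$ penalty never appears. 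This is cleaner for fixed $d$; the paper's joint-$g$ formulation, on the other hand, is the one that transports directly to Theorem~\ref{consithm}, where the linear part is no longer negligible and has to be carried through the empirical-process step.
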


{\it Remark.}
Note that $t$ is assumed to be nonrandom and satisfy the condition
(\ref{tcond}), and that $E\Xv=\zerov$. In this case, the
semiparametric efficiency bound for $\widehat{\btv}_{PSA,1}$ in
the partly linear model under sparsity is just
$\sigma^2\bR_{11}^{-1}$, see \citet{vw96}. Thus, we can claim that
$\widehat{\btv}_{PSA,1}$ is semiparametric efficient.\qed

If we use the partial spline solutions to construct the weights in
\eqref{par2}, and choose $\gamma=1$ and
$n^{2m/(2m+1)}\lambda_{i}\rightarrow\lambda_{i0}>0$ for $i=1,2$,
the above Theorem~\ref{brate} implies that the double penalized
estimators achieve the optimal rates for both parametric and
nonparametric estimation, i.e., (\ref{betarate})-(\ref{fratee}),
and that $\widehat{\btv}_{PSA}$ possesses the oracle properties,
i.e., the asymptotic normality of $\widehat{\btv}_{PSA,1}$ and
sparsity of $\widehat{\btv}_{PSA,2}$.

\vspace{-0.6cm} \subsection{Asymptotic Results for Diverging
$d_n$} \vskip -0.6cm Let $\btv=(\btv_{1}',\btv_{2}')'\in
R^{q_{n}}\times R^{m_{n}}=R^{d_{n}}$. Let
$\xv_{i}=(\wv_{i}',\zv_{i}')'$ where $\wv_{i}$ consists of the
first $q_{n}$ covariates, and $\zv_{i}$ consists of the remaining
$m_{n}$ covariates. Thus we can define the matrix
$\Xv_{1}=(\wv_{1},\ldots,\wv_{n})'$ and
$\Xv_{2}=(\zv_{1},\ldots,\zv_{n})'$. For any matrix $\bK$ we
denote its smallest and largest eigenvalue as $\lambda_{min}(\bK)$
and $\lambda_{max}(\bK)$, respectively.

Now, we give the additional regularity conditions
required to establish the large-sample theory for the increasing
dimensional case: \vspace{-0.3cm} \begin{enumerate} \item[R1D.]
There exist constants $0<b_{0}<b_{1}<\infty$ such that
\begin{eqnarray*}
b_{0}\leq\min\{|\beta_{j}|, 1\leq j\leq
q_{n}\}\leq\max\{|\beta_{j}|, 1\leq j\leq q_{n}\}\leq b_{1}.
\end{eqnarray*}

\item[R2D.] $\lambda_{min}(\sum_{k}\phi_{k}\phi_{k}'/n)\geq c_{3}>0$
for any $n$.

\item[R3D.] Let $\bR$ be the covariance matrix for the vector
$\Xv_i$. We assume that
\begin{eqnarray*}
0<c_{1}\leq \lambda_{min}(\bR)\leq\lambda_{max}(\bR)\leq
c_{2}<\infty\;\;\mbox{for any}\;\;n.
\end{eqnarray*}
\end{enumerate}
Conditions R2D and R3D are equivalent to Condition R2 when $d_n$ is assumed to be fixed.

\vspace{-0.5cm}\subsubsection{Convergence Rate of
$\widehat{\btv}_{PSA}$ and $\hat{f}_{PSA}$} \vspace{-0.5cm} We
first present a Lemma concerning about the convergence rate of the
initial estimate $\tilde{\btv}_{PS}$ given the increasing
dimension $d_n$. For two deterministic sequences $p_n, q_n=o(1)$,
we use the symbol $p_{n}\asymp q_{n}$ to indicate that
$p_{n}=O(q_{n})$ and $p_{n}^{-1}=O(q_{n}^{-1})$. Define $x\vee y$
($x\wedge y$) to be the maximum (minimum) value of $x$ and
$y$.\vspace{0.1in}

\begin{lemma}\label{intilemma}
Suppose that $\tilde{\btv}_{PS}$ is a partial smoothing spline
estimate, then we have \vspace{-.1in}\begin{eqnarray}
\|\tilde{\btv}_{PS}-\btv_{0}\|=O_{P}(\sqrt{d_{n}/n})\;\;
\mbox{given}\;\;d_{n}=n^{1/2}\wedge
n\lambda_{1}^{1/2m}.\label{inirate}
\end{eqnarray}
\end{lemma}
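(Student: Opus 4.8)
The plan is to analyze the partial smoothing spline estimator $\tilde\btv_{PS}$ by first profiling out the nonparametric component. Recall that for fixed $\btv$ the minimizer of the penalized least squares is $\hat\fv(\btv)=A(\lambda_1)(\yv-\Xv\btv)$, so $\tilde\btv_{PS}$ minimizes $(\yv-\Xv\btv)'[I-A(\lambda_1)](\yv-\Xv\btv)/n$, and hence solves the normal equations $\Xv'[I-A(\lambda_1)]\Xv\,\tilde\btv_{PS}=\Xv'[I-A(\lambda_1)]\yv$. Writing $\yv=\Xv\btv_0+\fv_0+\epsv$ with $\fv_0=(f_0(t_1),\ldots,f_0(t_n))'$, we obtain the decomposition
\begin{equation*}
\tilde\btv_{PS}-\btv_0=\bigl(\Xv'[I-A(\lambda_1)]\Xv\bigr)^{-1}\Xv'[I-A(\lambda_1)]\bigl(\fv_0+\epsv\bigr).
\end{equation*}
So the analysis splits into three pieces: (i) a lower bound on $\lambda_{min}$ of the Gram-type matrix $M_n:=\Xv'[I-A(\lambda_1)]\Xv/n$; (ii) control of the bias term $\Xv'[I-A(\lambda_1)]\fv_0/n$; and (iii) control of the stochastic term $\Xv'[I-A(\lambda_1)]\epsv/n$. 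Each must be made uniform in the dimension $d_n$.

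First I would handle (i). Using the standard fact that $I-A(\lambda_1)$ has eigenvalues in $[0,1]$ and that, under condition (\ref{tcond}) and the spectral representation of the smoothing-spline smoother, $I-A(\lambda_1)$ is "close to" a projection off the low-frequency (polynomial) directions, one shows $M_n=\bR+o_P(1)$ in operator norm — more precisely that $M_n$ differs from the sample covariance of the $\xv_i$'s orthogonalized against the spline space by a term that is negligible. Condition R2D/R3D (equivalently R2) then gives $\lambda_{min}(M_n)\geq c>0$ with probability tending to one, uniformly in $n$, so that $\|M_n^{-1}\|=O_P(1)$. For (ii), the bias term: because $f_0\in W_m[0,1]$ and $A(\lambda_1)$ reproduces smooth functions up to bias of the usual order, one has $\|[I-A(\lambda_1)]\fv_0\|^2/n=O(\lambda_1)$ by classical smoothing-spline bias bounds (this is where the $n\lambda_1^{1/2m}$ scale enters), and then $\|\Xv'[I-A(\lambda_1)]\fv_0/n\|\leq \|\Xv/\sqrt n\|_{op}\cdot\|[I-A(\lambda_1)]\fv_0\|/\sqrt n=O_P(\sqrt{d_n})\cdot O(\lambda_1^{1/2})$; the constraint $d_n=O(n\lambda_1^{1/2m})$ is exactly what forces this to be $O_P(\sqrt{d_n/n})$. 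For (iii), the stochastic term has mean zero given $\Xv$, and its conditional covariance is $\sigma^2\Xv'[I-A(\lambda_1)]^2\Xv/n^2\preceq\sigma^2 M_n/n$; hence $E\|\Xv'[I-A(\lambda_1)]\epsv/n\|^2\leq\sigma^2\,\mathrm{tr}(M_n)/n=O(d_n/n)$, using the sub-exponential tail condition R1 to upgrade from $L^2$ to the stated $O_P$ bound and to handle the growing number of coordinates.

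Combining the three pieces, $\|\tilde\btv_{PS}-\btv_0\|\leq\|M_n^{-1}\|\cdot\bigl(\|\Xv'[I-A(\lambda_1)]\fv_0/n\|+\|\Xv'[I-A(\lambda_1)]\epsv/n\|\bigr)=O_P(1)\cdot O_P(\sqrt{d_n/n})$, which is the claim, provided $d_n=n^{1/2}\wedge n\lambda_1^{1/2m}$. The role of the minimum is that we need \emph{both} $d_n/n\to 0$ for the variance term and $d_n=O(n\lambda_1^{1/2m})$ for the bias term; the $n^{1/2}$ branch is the generic requirement that makes $M_n$ concentrate and makes the remainder terms in the operator-norm approximation of step (i) negligible relative to $\sqrt{d_n/n}$.

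The main obstacle I anticipate is step (i): making the approximation $M_n\approx\bR$ rigorous and \emph{uniform in} $d_n$. The subtlety is that $A(\lambda_1)$ is an $n\times n$ matrix whose interaction with the growing-dimensional design $\Xv$ must be controlled in operator norm, not just entrywise; one cannot simply invoke a fixed-$d$ result. This requires a careful decomposition of $I-A(\lambda_1)$ into its action on the polynomial directions $[1,t,\ldots,t^{m-1}]$ (annihilated up to $O(\lambda_1)$) and its action on the orthogonal complement (where it is $\approx I$), together with a uniform bound on $\|\Xv/\sqrt n\|_{op}$ coming from the compact-support assumption on the $\xv_i$'s and R3D. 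Everything else is a routine assembly of classical smoothing-spline bias/variance bounds with bookkeeping in $d_n$.
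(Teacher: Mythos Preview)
Your overall architecture---profile out $f$, use the normal equations, then split into Gram matrix, bias, and noise---matches the paper's route, and steps (i) and (iii) are essentially right. The gap is step (ii).

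You bound the ``bias'' term by
\[
\|\Xv'[I-A(\lambda_1)]\fv_0/n\|\;\le\;\|\Xv/\sqrt n\|_{op}\cdot\|[I-A(\lambda_1)]\fv_0\|/\sqrt n
\;=\;O_P(\sqrt{d_n})\cdot O(\lambda_1^{1/2}),
\]
and then assert that $d_n=O(n\lambda_1^{1/2m})$ ``is exactly what forces this to be $O_P(\sqrt{d_n/n})$.'' This is false: $\sqrt{d_n}\,\lambda_1^{1/2}\le C\sqrt{d_n/n}$ requires $\lambda_1=O(n^{-1})$, whereas $d_n\le n\lambda_1^{1/2m}$ is a \emph{lower} bound on $\lambda_1$ in terms of $d_n$. (Take $m=2$, $\lambda_1\asymp n^{-4/5}$, $d_n\asymp n^{1/4}$: your bound is $n^{-11/40}$, the target is $n^{-15/40}$.) Using the sharper $\|\Xv/\sqrt n\|_{op}=O_P(1)$ from R3D does not help either; you would then need $n\lambda_1\le d_n$, which is again not implied.

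What the paper does---and what you are missing---is to exploit that $E\Xv_i=\zerov$ and that the $t_i$ are nonrandom: $(I-A)\fv_0$ is a fixed vector, so each coordinate $[\Xv'(I-A)\fv_0]_i=\sum_k X_{ki}[(I-A)\fv_0]_k$ is a mean-zero sum with variance $R_{ii}\|(I-A)\fv_0\|^2=O(n\lambda_1)$, not a deterministic bias to be bounded by Cauchy--Schwarz. In the paper this is packaged together with the noise: one shows each coordinate of $\Xv'\bigl[(I-A)\fv_0+\epsv\bigr]$ is $O_P(\sqrt n)$ via a Lindeberg-type argument (their Lemma~\ref{apple1}, display \eqref{inter4a}), whence the full $d_n$-vector has norm $O_P(\sqrt{d_n n})$ and the desired $O_P(\sqrt{d_n/n})$ follows after dividing by $n$. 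So the ``bias'' term is really another stochastic term, and your operator-norm bound throws away a factor of $\sqrt n$.

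A secondary remark: the constraint $d_n\le n\lambda_1^{1/2m}$ does not enter through (ii) at all; it enters through (i). Entrywise, $[\Xv' A\Xv/n]_{ij}$ has size governed by $\mathrm{tr}(A)=O(\lambda_1^{-1/2m})$ (see \eqref{inter2a} and \eqref{cova}), and summing $d_n^2$ such entries into a Frobenius norm is what forces $d_n n^{-1}\lambda_1^{-1/2m}=o(1)$. Your sketch of (i) via a ``projection off low-frequency directions'' is more elaborate than needed; the paper just computes moments of the entries of $\Xv' A\Xv$ directly.
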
\vspace{-0.1in}
Our next theorem gives the convergence rates for
$\widehat{\btv}_{PSA}$ and $\hat{f}_{PSA}$ when dimension of
$\btv_0$ diverges to infinity. In this increasing dimension
set-up, we find three results: (i) the convergence
rate for $\widehat{\btv}_{PSA}$ coincides with that for the
estimator in the linear regression model with increasing dimension
\citet{p84}, thus we can conclude that the presence of
nonparametric function and sparsity of $\btv_0$ does not affect
the overall convergence rate of $\widehat{\btv}_{PSA}$; (ii) the
convergence rate for $\hat{f}_{PSA}$ is slower than the regular
partial smoothing spline, i.e. $O_{P}(n^{-m/(2m+1)})$, and is
controlled by the dimension of important components of $\btv$,
i.e. $q_n$. (iii) the nonparametric estimator $\hat{f}_{PSA}$
always satisfies the desired smoothness condition, i.e.
$J_{\hat{f}_{PSA}}=O_P(1)$, even under increasing dimension of
$\btv$.

\begin{theorem}\label{consithm}
Suppose that $d_{n}=o(n^{1/2}\wedge n\lambda_{1}^{1/2m})$,
$n\lambda_{1}^{1/2m}\rightarrow\infty$ and
$\sqrt{n/d_n}\lambda_{2}\rightarrow 0$, we have
\begin{eqnarray}
\|\widehat{\btv}_{PSA}-\btv_{0}\|=O_{P}(\sqrt{d_{n}/n}).\label{pararatei}
\end{eqnarray}
If we further assume that $\lambda_{1}/q_{n}\asymp n^{-2m/(2m+1)}$
and
\begin{eqnarray}
\max_{j=q_{n}+1,\ldots,d_{n}}\frac{\sqrt{n/d_n}(\lambda_{2}/q_n)}{|\tilde{\beta}_{j}|^{\gamma}}=
O_{P}(n^{1/(2m+1)}d_{n}^{-3/2}), \label{l2ratei2}
\end{eqnarray}
then we have
\begin{eqnarray}
\|\widehat{f}_{PSA}-f_{0}\|_{n}&=&O_{P}(\sqrt{d_{n}/n}\vee
(n^{-m/(2m+1)}q_{n})),\label{nonpararatei}\\
J_{\hat{f}_{PSA}}&=&O_{P}(1).\label{ji}
\end{eqnarray}
\end{theorem}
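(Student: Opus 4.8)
The plan is to mimic the fixed-$d$ argument of Theorem \ref{brate}, but to track the dependence on $d_n$ and $q_n$ carefully at every step. First I would establish the parametric rate \eqref{pararatei}. Write $\btv=\btv_0+\sqrt{d_n/n}\,\uv$ and consider the criterion $Q(\btv)$ from \eqref{par3}; the goal is to show that for any $\varepsilon>0$ there is a constant $L$ so that, with probability at least $1-\varepsilon$, $Q(\btv_0+\sqrt{d_n/n}\,\uv)>Q(\btv_0)$ uniformly over $\|\uv\|=L$, which confines a local minimizer to the ball $\|\widehat{\btv}_{PSA}-\btv_0\|\le L\sqrt{d_n/n}$. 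The quadratic form $(\yv-\Xv\btv)'[I-A(\lambda_1)](\yv-\Xv\btv)$ splits into a ``signal'' part controlled by $\lambda_{min}(\Xv_{}'\Xv_{}/n)$ (bounded below via R2D/R3D) and a ``bias'' part from $[I-A(\lambda_1)]f_0$, whose size is governed by the approximation error of the smoother, of order $n\lambda_1^{1/2m}$-type terms; the condition $d_n=o(n^{1/2}\wedge n\lambda_1^{1/2m})$ is exactly what makes these remainder terms negligible relative to the $d_n$-term in the quadratic. The cross term with $\epsilon$ contributes $O_P(\sqrt{d_n}\cdot\sqrt{d_n/n})$ by R1 and a chaining/maximal-inequality bound over the $d_n$ coordinates, and the penalty term is dominated because $\sqrt{n/d_n}\lambda_2\to0$ kills the $\btv_1$ contribution and the $\btv_2$ contribution is non-negative. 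Collecting these gives \eqref{pararatei}; Lemma \ref{intilemma} supplies the rate of $\tilde\btv$ needed to control the adaptive weights $w_j=|\tilde\beta_j|^{-\gamma}$.

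Next, for the nonparametric statements \eqref{nonpararatei}--\eqref{ji}, I would use the representation $\hat\fv_{PSA}=A(\lambda_1)(\yv-\Xv\widehat{\btv}_{PSA})$ and decompose
\begin{equation*}
\hat\fv_{PSA}-f_0 = A(\lambda_1)(f_0)-f_0 \;+\; A(\lambda_1)\epsv \;+\; A(\lambda_1)\Xv(\btv_0-\widehat{\btv}_{PSA}).
\end{equation*}
The first term is the usual smoothing-spline bias, of order $O_P(\lambda_1^{1/2})$ in the $\|\cdot\|_n$-norm under \eqref{tcond}; the second is the usual stochastic term, also $O_P(\lambda_1^{1/2})$ by standard smoother-matrix trace bounds; with $\lambda_1/q_n\asymp n^{-2m/(2m+1)}$ this is $O_P(\sqrt{q_n}\,n^{-m/(2m+1)})$. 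The third term is new: since $A(\lambda_1)$ is a contraction-type operator, its $\|\cdot\|_n$-size is controlled by $\|\Xv(\btv_0-\widehat{\btv}_{PSA})\|_n\le \lambda_{max}(\Xv'\Xv/n)^{1/2}\|\widehat{\btv}_{PSA}-\btv_0\|=O_P(\sqrt{d_n/n})$ using R3D and \eqref{pararatei}. Taking the maximum of the three rates gives \eqref{nonpararatei}. For the smoothness bound \eqref{ji}, I would argue as in the fixed-$d$ case: evaluating the objective \eqref{par2} at the optimum versus at $(\widehat{\btv}_{PSA},f_0)$ shows $\lambda_1 J_{\hat f_{PSA}}^2 \le \|\yv-\Xv\widehat{\btv}_{PSA}-\hat\fv_{PSA}\|_n^2$-type quantities plus the already-controlled penalty terms, and dividing by $\lambda_1\asymp q_n n^{-2m/(2m+1)}$ together with the rate just obtained for $\|\hat\fv_{PSA}-f_0\|_n$ yields $J_{\hat f_{PSA}}^2=O_P(1)$; the technical condition \eqref{l2ratei2} is precisely what is needed so that the adaptive-penalty contribution $\lambda_2\sum_{j>q_n}|\hat\beta_j|/|\tilde\beta_j|^\gamma$ does not overwhelm the $\lambda_1 J^2$ term after this division.

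The main obstacle I anticipate is the careful bookkeeping of the smoother matrix $A(\lambda_1)$ when both $\lambda_1$ and the effective dimension vary with $n$: specifically, controlling $\mathrm{tr}(A(\lambda_1)^2)$, the operator norm of $[I-A(\lambda_1)]$ acting on the ``low-frequency'' polynomial part of $f_0$ and on $\Xv$, and showing that the cross-interactions between the parametric bias $\Xv(\btv_0-\widehat{\btv}_{PSA})$ and the nonparametric smoothing do not generate a term larger than $\sqrt{d_n/n}\vee(\sqrt{q_n}\,n^{-m/(2m+1)})$. This requires the equivalent-kernel/eigenvalue asymptotics for partial splines (as in \citet{w84,SC13}) made uniform in $n$, and it is where conditions R2D, R3D, and the scaling $\lambda_1/q_n\asymp n^{-2m/(2m+1)}$ together with $d_n=o(n^{1/2}\wedge n\lambda_1^{1/2m})$ all get used simultaneously. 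A secondary subtlety is handling the random adaptive weights: one must show $\min_{j>q_n}|\tilde\beta_j|^{-\gamma}$ blows up fast enough (via Lemma \ref{intilemma}) to force sparsity while $\max_{j\le q_n}|\tilde\beta_j|^{-\gamma}=O_P(1)$ stays bounded (via R1D), uniformly as $q_n,d_n\to\infty$.
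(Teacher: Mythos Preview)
Your treatment of the parametric rate \eqref{pararatei} matches the paper's proof essentially step for step: both localize $Q$ on a sphere of radius $\sqrt{d_n/n}$, use the profile criterion \eqref{par3}, and control $\dot L(\btv_0)$, $\ddot L(\btv_0)$ through the smoother-matrix facts collected in Lemma~\ref{apple1}.

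For the nonparametric part, however, you take a genuinely different route. The paper does \emph{not} use the explicit decomposition $\hat{\fv}-\fv_0=(A-I)\fv_0+A\epsv+A\Xv(\btv_0-\widehat\btv)$; instead it works with the joint function $g(x,t)=\xv'\btv+f(t)$, invokes the entropy/continuity-modulus bound of Mammen and van~de~Geer (their Theorem~2.2) to obtain the coupled inequalities \eqref{ratein0}--\eqref{ratein2}, and then solves these for $a_n=\|\hat g-g_0\|_n/[(1+J_{\hat f})q_n]$ and $J_{\hat f}$ \emph{simultaneously}. Your direct smoother-matrix argument for $\|\hat f-f_0\|_n$ is more elementary and in fact delivers the sharper bias order $\sqrt{q_n}\,n^{-m/(2m+1)}$ rather than $q_n\,n^{-m/(2m+1)}$; this is a real gain of your approach.

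The gap is in your argument for $J_{\hat f}=O_P(1)$. First, the penalty $\lambda_2\sum_j|\beta_j|/|\tilde\beta_j|^\gamma$ cancels exactly when you compare the objective at $(\widehat\btv,\hat f)$ versus $(\widehat\btv,f_0)$, so condition \eqref{l2ratei2} plays no role at the place you claim; in the paper it enters only because the basic inequality is written for the \emph{full} objective with $g$ varying, producing the term $\lambda_2(J_{\btv_0}-J_{\widehat\btv})/q_n$. Second, and more seriously, the basic inequality you propose leaves you with a contribution of size $\|\Xv(\btv_0-\widehat\btv)\|_n^2/\lambda_1=O_P\!\bigl(d_n/(q_n\,n^{1/(2m+1)})\bigr)$ after dividing by $\lambda_1\asymp q_n n^{-2m/(2m+1)}$. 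Under the theorem's hypotheses this need not be $O_P(1)$: take $m=2$, $q_n$ bounded, and $d_n\sim n^{1/4}$, which satisfies $d_n=o(n^{1/2}\wedge n\lambda_1^{1/2m})$ but gives $d_n/n^{1/5}\to\infty$. The paper's coupled empirical-process inequalities avoid this because the normalization by $(1+J_{\hat f})q_n$ absorbs the parametric error into $\|\hat g-g_0\|_n$ before the division, so $J_{\hat f}$ never sees $\|\Xv\Delta\|_n^2/\lambda_1$ directly. If you want to keep your smoother-matrix route, you will either need an additional restriction like $d_n=O(q_n n^{1/(2m+1)})$, or a sharper bound on $(\Xv\Delta)'A(\Xv\Delta)/n$ than the crude operator-norm estimate, e.g.\ via the elementwise control in \eqref{inter2a} together with a more careful quadratic-form inequality.
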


It seems nontrivial to improve the rate of convergence for the parametric estimate to the minimax optimal rate $\sqrt{q_n\log{d_n}/n}$ proven in \citet{BRT09}. The main reason is that the above rate result is proven in the (finite) dictionary learning framework which requires that the nonparametric function can be well approximated by a member of the span of a finite dictionary of (basis) functions. This key assumption does not straightforwardly hold in our smoothing spline setup. In addition, it is also unclear how to relax the Gaussian error condition assumed in \citet{BRT09} to the fairly weak sub-exponential tail condition assumed in our paper.

\vspace{-0.8cm} \subsubsection{Oracle Properties} \vskip -0.6cm In
this subsection, we show that the desired oracle properties
can also be achieved even in the increasing dimension case. In
particular, when showing the asymptotic normality of
$\widehat{\btv}_{PSA,1}$, we consider an arbitrary linear
combination of $\btv_{1}$, say $\Gv_n\btv_1$, where $\Gv_n$ is an
arbitrary $l\times q_{n}$ matrix with a finite $l$.

\begin{theorem}\label{asynorthm}
Given the following conditions: \vspace{-0.3cm}\begin{enumerate}
\item[D1.] $d_{n}=o(n^{1/3}\wedge (n^{2/3}\lambda_{1}^{1/3m}))$ and
$q_{n}=o(n^{-1}\lambda_{2}^{-2})$;

\item[S1.] $\lambda_{1}$ satisfies: $\lambda_{1}/q_{n}\asymp
n^{-2m/(2m+1)}$ and $n^{m/(2m+1)}\lambda_{1}\rightarrow 0$;

\item[S2.] $\lambda_{2}$ satisfies: \vspace{-.1in}\begin{eqnarray}
\min_{j=q_{n}+1, \ldots,
d_{n}}\frac{\sqrt{n/d_{n}}\lambda_{2}}{|\tilde{\beta}_{j}|
^{\gamma}}\overset{P}{\longrightarrow}\infty,\label{l2ratei1}
\end{eqnarray}
\vspace{-.4in}\end{enumerate} we have
\vspace{-0.3cm}\begin{enumerate} \item[(a)] Sparsity:
$P(\widehat{\btv}_{PSA,2}=\zerov)\rightarrow 1$

\item[(b)] Asymptotic Normality: \vspace{-0.1in}\begin{eqnarray}
\sqrt{n}\Gv_{n}\bR_{11}^{1/2}(\widehat{\btv}_{PSA,1}-\btv_{1})
\overset{d}{\rightarrow}N(\zerov,\sigma^{2}\Gv),\label{asydisti}
\end{eqnarray}
where $\Gv_{n}$ be a non-random $l\times q_{n}$ matrix with full
row rank such that $\Gv_{n}\Gv_{n}'\rightarrow \Gv$.
\end{enumerate}
\end{theorem}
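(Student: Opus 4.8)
The plan is to combine the convergence-rate result of Theorem~\ref{consithm} with a careful analysis of the Karush–Kuhn–Tucker (KKT) conditions for the weighted LASSO problem \eqref{par3}, profiling out the nonparametric component. First I would establish the sparsity claim (a). Working from the subgradient optimality conditions for $Q(\btv)$, a candidate $\widehat{\btv}=(\widehat{\btv}_1',\zerov')'$ (with $\widehat{\btv}_1$ the oracle-restricted minimizer) is the actual minimizer provided that for every inactive coordinate $j>q_n$ one has $|\nabla_j \tfrac{1}{n}(\yv-\Xv\widehat{\btv})'[I-A(\lambda_1)](\yv-\Xv\widehat{\btv})| < \lambda_2 w_j$. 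I would bound the left side by splitting the residual $\yv-\Xv\widehat{\btv} = \epsv + (\fv_0 - A(\lambda_1)\yv\text{-type terms}) + \Xv_1(\btv_1-\widehat{\btv}_1)$: the stochastic part is controlled using condition R1 (sub-exponential tail) plus R2D, the smoothing-bias part is controlled by S1 ($\lambda_1/q_n \asymp n^{-2m/(2m+1)}$ and $n^{m/(2m+1)}\lambda_1\to0$) together with $J_{\hat f}=O_P(1)$ from \eqref{ji}, and the estimation-error part by \eqref{pararatei} and R3D. The key leverage is then the right-hand side blow-up \eqref{l2ratei1}: since $\sqrt{n/d_n}\,\lambda_2/|\tilde\beta_j|^\gamma \to \infty$ for all inactive $j$, the threshold $\lambda_2 w_j$ dominates the $O_P(\sqrt{d_n/n})$-type bound on the gradient uniformly over $j=q_n+1,\dots,d_n$, which after a union bound (affordable because $d_n=o(n^{1/3}\wedge n^{2/3}\lambda_1^{1/3m})$ by D1, keeping $\log d_n$ small relative to the gap) gives $P(\widehat{\btv}_{PSA,2}=\zerov)\to1$.

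For part (b), I would now argue on the event $\{\widehat{\btv}_{PSA,2}=\zerov\}$, which has probability tending to one, so that $\widehat{\btv}_{PSA,1}$ solves the oracle problem on the first $q_n$ coordinates. Writing the stationarity condition for $\widehat{\btv}_1$ and linearizing, I expect to obtain a representation of the form
\begin{equation*}
\sqrt{n}\,(\widehat{\btv}_{PSA,1}-\btv_1) = \bigl(\tfrac{1}{n}\Xv_1'[I-A(\lambda_1)]\Xv_1\bigr)^{-1}\Bigl(\tfrac{1}{\sqrt n}\Xv_1'[I-A(\lambda_1)]\epsv + \text{bias} + \text{shrinkage}\Bigr).
\end{equation*}
The shrinkage term is $\sqrt{n}\,\lambda_2 w_j\,\mathrm{sign}(\beta_{1j})$ per coordinate, which is $o_P(1)$ because on the active set $w_j=1/|\tilde\beta_j|^\gamma$ stays bounded (by R1D plus consistency of $\tilde\btv$) and $q_n=o(n^{-1}\lambda_2^{-2})$ forces $\sqrt{q_n}\sqrt n\lambda_2=o(1)$. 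The smoothing-bias term $\tfrac{1}{\sqrt n}\Xv_1'[I-A(\lambda_1)]\fv_0$ must be shown to be $o_P(1)$: this is where S1 is used, exploiting that $[I-A(\lambda_1)]$ nearly annihilates smooth functions with the residual of order $\lambda_1^{1/2}\cdot$(something), and that $E\Xv=\zerov$ makes the cross term centered, so $n^{m/(2m+1)}\lambda_1\to0$ gives the required decay even after accounting for the $q_n$ factor via D1.

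The remaining work is a triangular-array CLT for $\tfrac{1}{\sqrt n}\Gv_n\bR_{11}^{1/2}\bigl(\tfrac1n\Xv_1'[I-A(\lambda_1)]\Xv_1\bigr)^{-1}\Xv_1'[I-A(\lambda_1)]\epsv$. I would first show $\tfrac1n\Xv_1'[I-A(\lambda_1)]\Xv_1 \overset{P}{\to}\bR_{11}$ in the appropriate (operator-norm on the relevant finite-dimensional projections) sense, again using that $A(\lambda_1)$ acts like a low-"effective-dimension" smoother plus R2D, R3D; then the asymptotic variance collapses to $\sigma^2\Gv_n\Gv_n'\to\sigma^2\Gv$. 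Since $\Gv_n$ has fixed finite row rank $l$, the vector $\Gv_n\bR_{11}^{1/2}(\cdots)$ is $l$-dimensional, so the Lindeberg condition is checkable coordinatewise: each summand is a bounded linear functional of $\epsv_i$ with weights of order $n^{-1/2}$ whose squared sum is $O(1)$, and the sub-exponential tail in R1 gives Lindeberg immediately. I expect the main obstacle to be the uniform-over-$j$ control of the profiled gradient in the sparsity step — in particular, simultaneously handling the smoothing-operator bias $A(\lambda_1)\fv_0$, the interaction $\Xv_2'A(\lambda_1)\Xv_1$, and the $d_n$-dimensional maximum of sub-exponential noise, and checking that the conditions D1 and \eqref{l2ratei1} are exactly strong enough to close the gap; the CLT and the handling of the active-set bias should be comparatively routine given Theorem~\ref{consithm} and the $E\Xv=\zerov$, nonrandom-$t$ structure already exploited in Theorem~\ref{brate}.
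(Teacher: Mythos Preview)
Your proposal follows essentially the same strategy as the paper: sparsity via the sign of the subgradient of $Q$ on the inactive coordinates (this is exactly the paper's extension of the argument for 3(a) in Theorem~\ref{brate}, invoking Lemma~\ref{apple1} and condition~\eqref{l2ratei1}), and asymptotic normality via the linearized stationarity equation for $\widehat{\btv}_1$ followed by a Cram\'er--Wold/Lindeberg CLT.

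There is one organizational difference worth flagging. You propose to split the score as $\tfrac{1}{\sqrt n}\Xv_1'[I-A]\epsv$ plus the bias $\tfrac{1}{\sqrt n}\Xv_1'[I-A]\fv_0$, and then run Lindeberg on the first piece. But $\Xv_1'(I-A)\epsv$ is \emph{not} a sum of independent summands over $j$ (the $j$th term involves $[(I-A)\epsv]_j=\sum_k(I-A)_{jk}\epsilon_k$), so you would be forced into a conditional-on-$\Xv$ argument or a further split $\Xv_1'\epsv-\Xv_1'A\epsv$. The paper instead writes $\Xv_1'(I-A)(\fv_0+\epsv)=\Xv_1'[(I-A)\fv_0+\epsv]-\Xv_1'A\epsv$, so that the leading term $M_{1n}=n^{-1/2}\Gv_n\bR_{11}^{-1/2}\sum_j \wv_j\{[(I-A)\fv_0]_j+\epsilon_j\}$ is a genuine sum of i.i.d.\ summands (since $[(I-A)\fv_0]_j$ is deterministic and $(\wv_j,\epsilon_j)$ are i.i.d.); the bias then enters only through the variance and vanishes because $n^{-1}\sum_j[(I-A)\fv_0]_j^2=O(\lambda_1)\to0$. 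Two smaller misattributions: the condition $n^{m/(2m+1)}\lambda_1\to0$ is used to kill $M_{2n}=-n^{-1/2}\Gv_n\bR_{11}^{-1/2}\Xv_1'A\epsv$ (via $\mathrm{tr}(A^2)=O(\lambda_1^{-1/2m})$), not the bias term; and $J_{\hat f}=O_P(1)$ plays no role in part~(b), since the relevant bias involves $f_0$, not $\hat f$.
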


In Corollary~\ref{corid}, we give the fastest possible increasing
rates for the dimensions of $\btv_0$ and its important components
to guarantee the estimation efficiency and selection
consistency. The range of the smoothing and shrinkage
parameters are also given.

\begin{corollary}\label{corid}
Let $\gamma=1$. Suppose that $\tilde{\btv}$ is the partial
smoothing spline solution. Then, we have
\vspace{-0.2cm}\begin{enumerate} \item
$\|\widehat{\btv}_{PSA}-\btv_{0}\|=O_{P}(\sqrt{d_{n}/n})$ and
$\|\hat{f}_{PSA}-f_{0}\|_{n}=O_{P}(\sqrt{d_{n}/n}\vee
(n^{-m/(2m+1)}q_{n}))$;

\item $\widehat{\btv}_{PSA}$ possesses the oracle properties.
\end{enumerate}
\vspace{-0.3cm}if the following dimension and smoothing parameter
conditions hold:
\begin{eqnarray}
&&d_{n}=o(n^{1/3})\;\mbox{and}\;q_{n}=o(n^{1/3}),\label{dcon}\\
&&n\lambda_1^{1/2m}\rightarrow\infty,
n^{m/(2m+1)}\lambda_1\rightarrow
0\;\;\mbox{and}\;\;\lambda_1/q_n\asymp n^{-2m/(2m+1)},\label{scon1}\\
&&\sqrt{n/d_n}\lambda_2\rightarrow 0,
(n/d_n)\lambda_2\rightarrow\infty\;\;\mbox{and}\;\;\sqrt{d_n}(n/q_n)\lambda_
2=O(n^{1/(2m+1)}).\label{scon2}
\end{eqnarray}
\end{corollary}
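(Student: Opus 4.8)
The plan is to obtain Corollary~\ref{corid} as a specialization of Theorems~\ref{consithm} and \ref{asynorthm}: fix $\gamma=1$, let the initial estimate $\tilde{\btv}$ be the partial smoothing spline solution $\tilde{\btv}_{PS}$, and show that the explicit conditions \eqref{dcon}--\eqref{scon2} imply every hypothesis of those two theorems, after which the conclusions follow by citing them.

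First I would pin down the initial estimate. Under \eqref{dcon} together with $\lambda_{1}/q_{n}\asymp n^{-2m/(2m+1)}$ from \eqref{scon1} one has $n\lambda_{1}^{1/2m}\asymp q_{n}^{1/2m}n^{2m/(2m+1)}$, which for every $m\geq1$ grows faster than $n^{1/3}$ and hence than $d_{n}$; thus $d_{n}=o(n^{1/2}\wedge n\lambda_{1}^{1/2m})$ and Lemma~\ref{intilemma} gives $\|\tilde{\btv}_{PS}-\btv_{0}\|=O_{P}(\sqrt{d_{n}/n})$. Since $\beta_{0j}=0$ for $j>q_{n}$, this forces $\max_{j>q_{n}}|\tilde{\beta}_{j}|=O_{P}(\sqrt{d_{n}/n})$, which is what makes the left side of the divergence condition \eqref{l2ratei1} large; controlling the weights in the other direction (needed for \eqref{l2ratei2}) also requires a lower bound on $\min_{j>q_{n}}|\tilde{\beta}_{j}|$, which I would extract from the asymptotically Gaussian law of the zero coordinates of $\tilde{\btv}_{PS}$ by a union bound over the $m_{n}$ of them.

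Next I would verify the hypotheses of Theorem~\ref{consithm} with $\gamma=1$: $n\lambda_{1}^{1/2m}\to\infty$ and $\sqrt{n/d_{n}}\lambda_{2}\to0$ are the leading clauses of \eqref{scon1} and \eqref{scon2}; $\lambda_{1}/q_{n}\asymp n^{-2m/(2m+1)}$ is in \eqref{scon1}; and after inserting the bounds on $|\tilde{\beta}_{j}|$, condition \eqref{l2ratei2} becomes a constraint on $\lambda_{2}$ that is implied by the last clause $\sqrt{d_{n}}(n/q_{n})\lambda_{2}=O(n^{1/(2m+1)})$ of \eqref{scon2}. Theorem~\ref{consithm} then yields $\|\widehat{\btv}_{PSA}-\btv_{0}\|=O_{P}(\sqrt{d_{n}/n})$, the stated rate for $\|\hat{f}_{PSA}-f_{0}\|_{n}$, and $J_{\hat{f}_{PSA}}=O_{P}(1)$, which is part~1. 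For part~2 I would check D1, S1, S2 of Theorem~\ref{asynorthm}: $d_{n}=o(n^{1/3}\wedge(n^{2/3}\lambda_{1}^{1/3m}))$ follows from \eqref{dcon} since $n^{2/3}\lambda_{1}^{1/3m}\asymp q_{n}^{1/3m}n^{2/3-2/(3(2m+1))}$ again outgrows $n^{1/3}$ for all $m\geq1$; $q_{n}=o(n^{-1}\lambda_{2}^{-2})$ follows by combining \eqref{dcon} with the upper clause of \eqref{scon2}; S1 is exactly \eqref{scon1}; and the divergence S2$=$\eqref{l2ratei1} follows from $(n/d_{n})\lambda_{2}\to\infty$ in \eqref{scon2}, because with $\gamma=1$ we get $\min_{j>q_{n}}\sqrt{n/d_{n}}\lambda_{2}/|\tilde{\beta}_{j}|\geq c\,(n/d_{n})\lambda_{2}$ for some $c>0$ with probability tending to one. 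Theorem~\ref{asynorthm} then gives sparsity and the asymptotic normality \eqref{asydisti}.

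The step I expect to be the main obstacle is checking that the window for $\lambda_{2}$ in \eqref{scon2} is nonempty: the nonparametric rate forces $\lambda_{2}/|\tilde{\beta}_{j}|^{\gamma}$ to be small (the upper clause of \eqref{scon2}, feeding \eqref{l2ratei2}), while sparsity forces it to diverge (the middle clause, feeding \eqref{l2ratei1}), so one has to verify these are compatible under the growth $d_{n},q_{n}=o(n^{1/3})$ and the calibration $\lambda_{1}/q_{n}\asymp n^{-2m/(2m+1)}$. Tied to this is the only genuinely stochastic ingredient not already supplied by the cited theorems, namely the two-sided control of $|\tilde{\beta}_{j}|^{-1}$ for $j>q_{n}$: because $\tilde{\btv}_{PS}$ is not sparse, one must rule out its smallest zero coordinate being pathologically close to $0$, which I would handle by a union bound using that those coordinates have densities near the origin of order $\sqrt{n/d_{n}}$. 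Everything else is routine bookkeeping with powers of $n$, $d_{n}$ and $q_{n}$.
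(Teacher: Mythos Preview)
Your reduction to Theorems~\ref{consithm} and~\ref{asynorthm} via Lemma~\ref{intilemma} is exactly the paper's intended (and unwritten) argument, and your bookkeeping for D1, S1, and the divergence \eqref{l2ratei1} from \eqref{dcon}--\eqref{scon2} is correct.

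You are also right that the two-sided control of $|\tilde\beta_j|$ for $j>q_n$ is the only substantive step: the last clause of \eqref{scon2} is calibrated so that \eqref{l2ratei2} would follow once $\max_{j>q_n}|\tilde\beta_j|^{-1}=O_P(\sqrt{n/d_n})$. But your density claim is off. Lemma~\ref{intilemma} bounds the \emph{sum} of squared errors at order $d_n/n$, so each individual zero coordinate of $\tilde\btv_{PS}$ has variance of order $1/n$ (a diagonal entry of $\sigma^2\bR^{-1}/n$), and hence density near the origin of order $\sqrt{n}$, not $\sqrt{n/d_n}$. Consequently even a single coordinate already fails $|\tilde\beta_j|^{-1}=O_P(\sqrt{n/d_n})$ when $d_n\to\infty$, since that bound would force $|\tilde\beta_j|$ to exceed $\sqrt{d_n}$ times its own standard deviation with high probability; and a union bound over the $m_n$ zero coordinates only yields $\max_{j>q_n}|\tilde\beta_j|^{-1}=O_P(m_n\sqrt{n})$, which leaves an extra factor $m_n\sqrt{d_n}$ beyond what the last clause of \eqref{scon2} supplies. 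The paper does not address this point either---the corollary is stated without proof---so your diagnosis of the obstacle is sharper than the paper's, but neither your union-bound patch nor anything extractable from Lemma~\ref{intilemma} alone closes the gap between \eqref{scon2} and \eqref{l2ratei2} in the diverging-dimension regime.
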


Define $d_n\asymp n^{\widetilde d}$ and $q_n\asymp n^{\widetilde q}$,
where $0\leq \widetilde q\leq \widetilde d<1/3$ according to (\ref{dcon}).
For the usual case that $m\geq 2$, we can give a set of sufficient conditions
for (\ref{scon1})-(\ref{scon2}) as: $\lambda_1\asymp n^{-r_1}$ and $\lambda_2\asymp n^{-r_2}$
for $r_1=2m/(2m+1)-\widetilde q$, $r_2=\widetilde{d}/2+2m/(2m+1)-\widetilde q$ and
$(1-\widetilde{d})/2<r_2<1-\widetilde d$. The above conditions are very easy to check.
For example, if $m=2$, we can set $\lambda_1 \asymp n^{-0.55}$ and $\lambda_2\asymp n^{-0.675}$ when $d_n\asymp
n^{1/4}$, $q_n\asymp n^{1/4}$.

In \citet{NZZ09}, the authors considered variable selection in partly linear models when dimension is increasing.
Under $m=2$, they applied SCAD penalty on the parametric part and proved oracle property together with asymptotic normality.
In this paper, we considered general $m$ and applied adaptive LASSO for the parametric part. Besides oracle property,
we also derived rate of convergence
for the nonparametric estimate. The technical proof relies on nontrivial applications of RKHS theory and model empirical processes theory. Therefore, our results are substantially different from \citet{NZZ09}.
The numerical results provided in Section \ref{sec:example} demonstrate satisfactory selection accuracy.
Furthermore, we are able to report the estimation accuracy of the nonparametric estimate, which is also satisfactory.

\vspace{-0.5cm} \section{Computation and Tuning}

\vskip -0.6cm \subsection{Algorithm} \vskip -0.6cm
We propose a two-step procedure to obtain the
PSA estimator: first compute $\widehat{\btv}_{PSA}$, then compute $\hat{f}_{PSA}$. As shown
in Section 2, we need to minimize (\ref{par3}) to estimate $\beta$. Define the square root matrix of $I-A(\lambda_1)$ as $T$, i.e.
$I-A(\lambda_1)=T'T$. Then (\ref{par3}) can be
reformulated into a LASSO-type problem\vspace{-0.05in}
\begin{equation}
\label{lasso1}
\min\frac{1}{n}(\yv^*-\Xv^*\btv^*)'(\yv^*-\Xv^*\btv^*)+\lambda_2\sum_{j=1}^d |\beta_j^*|,
\end{equation}
where the transformed variables are $\yv^*=T\yv$, $\Xv^*=T\Xv\bW$, and $\beta_j^*=\beta_j/|\tilde{\beta}_{j}|^{\gamma}, j=1, \cdots, d$,
with $\bW=\mbox{diag}\{|\tilde{\beta}_{j}|^{\gamma}\}$. Therefore, (\ref{lasso1}) can be conveniently solved with the LARS algorithm
\citet{ehjt04}.

Now assume $\widehat{\btv}_{PSA}$ has been obtained. Using the
standard smoothing spline theory, it is easy to show that
$\hat{\fv}_{PSA}=A(\lambda_1)(\yv-\Xv\widehat{\btv}_{PSA})$, where
$A$ is the influence matrix. By the reproducing kernel Hilbert
space theory \citet{kw71}, $W_m[0,1]$ is an RKHS when equipped
with the inner product
$$(f,g)=\sum_{\nu=0}^{m-1}\left[\int_0^1f^{(\nu)}(t)dt\right]\left[\int_0^1g^{(\nu)}(t)dt\right]+\int_0^1f^{(m)}g^{(m)}dt.$$
We can decompose $W_m[0,1]=\bfh_0\oplus\bfh_1$ as a direct sum of two RKHS subspaces. In particular,
$\bfh_0=\{f:f^{(m)}=0\}=\mbox{span}\{k_\nu(t),\nu=0,\cdots,m-1\}$, where $k_\nu(t)=B_\nu(t)/\nu!$
and $B_\nu(t)$ are Bernoulli polynomials \citet{as64}.
$\bfh_1=\{f:\int_0^1f^{(\nu)}(t)dt=0,\nu=0,\cdots,m-1;f^{(m)}\in\Lc_2[0,1]\}$, associated with the
reproducing kernel $K(t,s)=k_m(t)k_m(s)+(-1)^{m-1}k_{2m}([s-t])$, where $[\tau]$ is the fractional part of
$\tau$. Let $S$ be a $n\times n$ square matrix with $s_{i,\nu}=k_{\nu-1}(t_i)$ and $\Sigma$ be a square matrix
with the $(i,j)$-th entry $K(t_i,t_j)$. Let the QR decomposition of
$S$ be $S=(F_1,F_2)\begin{pmatrix}U\\ 0\end{pmatrix}$, where $F=[F_1, F_2]$ is orthogonal and $U$ is
upper triangular with $S'F_2=0$. As shown in
\citet{w84} and \citet{g02}, the influence matrix $A$ can be expressed as
$$A(\lambda_1)=I-n\lambda_1F_2(F_2'VF_2)^{-1}F_2',$$
where $V=\Sigma+n\lambda_1I$. Using the representer theorem
(\citet{w90}), we can compute the nonparametric estimator as
$$\hat{f}_{PSA}(t)=\sum_{\nu=0}^{m-1}\hat{b}_\nu k_\nu(t)+\sum_{i=1}^n\hat{c}_iK(t, t_i),$$
where $\widehat{\cv}=F_2(F_2'VF_2)^{-1}F_2'\yv$ and $\widehat{\bv}=U^{-1}F_1'(\yv-\Sigma\widehat{\cv})$.
We summarize the algorithm in the following:

Step 1. Fit the standard smoothing spline and construct the weights $w_j$'s. Compute
$\yv^*$ and $\Xv^*$.

Step 2. Solve \eqref{lasso1} using the LARS algorithm. Denote the solution as
$\widehat{\btv}^*=(\hat{\beta}_1^*,\cdots,\hat{\beta}_d^*)'$.

Step 3. Calculate $\widehat{\btv}_{PSA}=(\hat{\beta}_1, \cdots, \hat{\beta}_d)'$ by
$\hat{\beta}_{j}=\hat{\beta}_j^*|\tilde{\beta}_{j}|^{\gamma}$ for
$j=1,\cdots, d$.

Step 4. Obtain the nonparametric fit by
$\widehat{\fv}=S\widehat{\bv}+\Sigma\widehat{\cv}$, where the
coefficients are computed as
$\widehat{\cv}=F_2(F_2'VF_2)^{-1}F_2'\yv$ and
$\widehat{\btv}=U^{-1}F_1'(\yv-\Sigma\widehat{\cv})$.
\vspace{-0.2in}

\subsection{Parameter Tuning} \vskip -0.6cm
One possible tuning approach for the double penalized estimator is
to choose $(\la_1,\la_2)$ {\it jointly} by minimizing some scores.
Following the local quadratic approximation (LQA) technique used
in \citet{t96} and \citet{fl01}, we can derive the GCV score as a
function of $(\la_1,\la_2)$. Define the diagonal matrix
$D(\btv)=\mbox{diag}\{1/|\tilde{\beta}_{1}\beta_1|,\cdots,1/|\tilde{\beta}_{d}\beta_d|\}$.
The solution $\widehat{\btv}_{PSA}$ can be approximated by
$$\left[\Xv'\{I-A(\la_1)\}\Xv+n\la_2D(\widehat{\btv}_{PSA})\right]^{-1}\Xv'\{I-A(\la_1)\}\yv
\equiv H\yv.$$ Correspondingly,
$\widehat{\fv}_{PSA}=A(\la_1)(\yv-\Xv\widehat{\btv}_{PSA})=
A(\la_1)[I-\Xv H]\yv$. Therefore, the predicted response can
be approximated as
$\widehat{\yv}=X\widehat{\btv}_{PSA}+\hat{\fv}_{PSA}=M(\la_1,\la_2)\yv$,
where
$$M(\la_1,\la_2)=\Xv H+A(\la_1)[I-\Xv H].$$
Therefore, the number of effective parameters in the double
penalized fit $(\widehat{\btv}_{PSA},\hat{\fv}_{PSA})$ may be
approximated by $\mbox{tr}\left(M(\la_1,\la_2)\right)$. The GCV
score can be constructed as
$$GCV(\la_1,\la_2)=\frac{n^{-1}\sum_{i=1}^n(y_i-\hat{y}_i)^2}{[1-n^{-1}\mbox{tr}\left(M(\la_1,\la_2)\right)]^2}.$$

The two-dimensional search is computationally expensive in
practice. In the following, we suggest an alternative
{\it two-stage} tuning procedure. Since $\la_1$ controls the
partial spline fit
$(\widetilde{\btv},\widetilde{\bv},\widetilde{\cv})$, we first
select $\la_1$ using the GCV at Step 1 of the computation
algorithm:
$$GCV(\la_1)=\frac{n^{-1}\sum_{i=1}^n(y_i-\tilde{y}_i)^2}{[1-n^{-1}\mbox{tr}
\{\tilde{A}(\la_1)\}]^2},$$ where
$\widetilde{\yv}=(\tilde{y}_1,\cdots,\tilde{y}_n)'$ is the partial
spline prediction and $\tilde{A}(\la_1)$ is the influence matrix
for the partial spline solution. Let
$\la_1^*=\mbox{arg}\min_{\la_1}\mbox{GCV}(\la_1)$. We can also
select $\la_1^*$ using GCV in the smoothing spline problem:
$Y_i-\Xv_i'\tilde{\btv}=f(t_i) + \eps_i$, where $\tilde{\btv}$ is
the $\sqrt{n}$-consistent difference-based estimator \citet{y97}.
This substitution approach is theoretically valid for selection
$\la_1$ since the convergence rate of $\tilde{\btv}$ is faster
than the nonparametric rate for estimating $f$, and thus
$\tilde{\btv}$ can be treated as the true value. At the successive
steps, we fix $\la_1$ at $\la_1^*$ and only select $\la_2$ for the
optimal variable selection. \citet{wlt07, zl07, wll09} suggested
that BIC works better in terms of consistent model selection
than the GCV when tuning $\la_2$ for the
adaptive LASSO in the context of linear models even with diverging
dimension. Therefore, we propose to choose $\la_2$ by minimizing
$$\mbox{BIC}(\la_2)=(\yv-\Xv\widehat{\btv}_{PSA}-\hat{\fv}_{PSA})'
(\yv-\Xv\widehat{\btv}_{PSA}-\hat{\fv}_{PSA})/\hat{\sigma}^2+\log(n)\cdot
r,$$ where $r$ is the number of nonzero coefficients in
$\widehat{\btv}$, and the estimated residual variance
$\hat{\sigma}^2$ can obtained from the standard partial spline
model, i.e.
$\hat{\sigma}^2=(\yv-\Xv\widetilde{\btv}_{PS}-\widetilde{\fv}_{PS})'
(\yv-\Xv\widetilde{\btv}_{PS}-\widetilde{\fv}_{PS})/(n-
\mbox{tr}(\tilde{A}(\lambda_1))-d)$.

\vspace{-0.6cm} \section{Numerical Studies}\label{sec:example}

\vspace{-0.6cm}\subsection{Simulation 1} \vspace{-0.2in}We compare the standard partial
smoothing spline model with the new procedure under the LASSO (with $w_j=1$ in (\ref{par1})) and
adaptive (ALASSO) penalty. In the following, these three methods
are respectively referred to as ``PS'', ``PSL'' and ``PSA''. We
also include the ``Oracle model'' fit assuming the true model were known. In all the examples, we use $\gamma=1$ for PSA
and consider two sample sizes $n=100$ and $n=200$. The smoothness parameter $m$ was chosen to be $2$ in all the numerical experiments.

In each setting, a total of 500 Monte Carlo (MC) simulations are carried out. We report the MC sample mean and standard deviation
(given in the parentheses) for the MSEs. Following \citet{fl04}, we use mean squared error
$MSE(\widehat{\btv})=E\|\widehat{\btv}-\btv\|^2$ and mean integrated squared error
$MISE(\hat{f})=E\left[\int_0^1\{\widehat{f}(t)-f(t)\}^2 dt\right]$ to evaluate goodness-of-fit for parametric and nonparametric
estimation, respectively, and compute them by averaging over data knots in the simulations.
To evaluate the variable selection performance of each method, we report the number of correct zero (``correct 0'') coefficients, the number of
coefficients incorrectly set to 0 (``incorrect 0''), model size, and the empirical probability of capturing the true model.

We generate data from a model $Y_i
=\Xv_i'\btv + f(T_i)+\varepsilon_i$, and consider
two following model settings: \vspace{-0.3cm}\begin{itemize} \item Model 1:
$\btv=(3,2.5,2,1.5,0,\cdots,0)'$, $d=15$ and $q=4$. And
$f_1(t)=1.5\sin(2\pi t)$.

\item Model 2: Let
$\btv=(3,\cdots,3,0,\cdots,0)'$, $d=20$ and $q=10$. The
nonparametric function $f_2(t)
=t^{10}(1-t)^4/(3B(11,5))+4t^4(1-t)^{10}/(15B(5,11))$, where the
beta function $B(u,v)=\int_0^1 t^{u-1}(1-t)^{v-1}dt$. Two model coefficient vectors $\btv_1=\btv$ and $\btv_2=\btv/3$ were considered.
The Euclidean norms of $\btv_1$ and $\btv_2$ are
$\|\btv_1\|=9.49$ and $\|\btv_2\|=3.16$ respectively. The supnorm of $f$ is $\|f\|_{\sup}=1.16$.
So the ratios $\|\btv_1\|/\|f\|_{\sup}=8.18$ and $\|\btv_2\|/\|f\|_{\sup}=2.72$,
denoted as the parametric-to-nonparametric signal ratios (PNSR).
The two settings represent high and low PNSR's respectively.
\end{itemize}
Two possible distributions for the covariates $X$ and $T$:
\vspace{-0.3cm}\begin{itemize} \item Model 1:
$X_1,\cdots,X_{15},T$ are i.i.d. generated from Unif$(0,1)$. \item
Model 2:  $\Xv=(X_1,\cdots,X_{20})'$ are standard normal with AR(1) correlation, i.e.
$\mathrm{corr}(X_i,X_j)=\rho^{|i-j|}$. $T$ follows Unif$(0,1)$ and is independent with $X_{i}$'s.
We consider $\rho=0.3$ and $\rho=0.6$.
\end{itemize}
Two possible error distributions are used in these two settings:
\vspace{-0.3cm}\begin{itemize} \item Model 1: (normal error)
$\epsilon_1\sim N(0,\sigma^2)$, with $\sigma=0.5$ and $\sigma=1$,
respectively. \item Model 2: (non-normal error) $\epsilon_2\sim
t_{10}$, $t$-distribution with degrees of freedom 10.
\end{itemize}

Table 1 compares the model fitting and variable selection
performance of various procedures in different settings for Model
1. It is evident that the PSA procedure outperforms both the PS
and PSL in terms of both the MSE and variable selection. The three
procedures give similar performance in estimating the
nonparametric function.

\begin{table}[H]
\begin{center}
\caption{Variable selection and fitting results for Model 1}
\smallskip
\begin{tabular}{|c|c|c|c|c|r|r|c|}\hline
$\sigma$ & $n$ &Method & MSE($\widehat{\btv}_{PSA}$)&
MISE($\hat{f}_{PSA}$) &\multicolumn{1}{c|}{Size} &
\multicolumn{2}{c|}{Number of Zeros}\\\hline & &&& &
&\multicolumn{1}{c}{correct 0}&\multicolumn{1}{c|}{incorrect
0}\\\hline
0.5&100& PS  & 0.578 (0.010) & 0.015 (0.000) & 15 (0) & 0 (0) & 0 (0)\\
&&PSL        & 0.316 (0.008) & 0.015 (0.000) & 7.34 (0.09) & 7.66 (0.09) & 0.00 (0.00) \\
& &PSA       & 0.234 (0.008) & 0.014 (0.000) & 4.53 (0.04) & 10.47 (0.04)& 0.00 (0.00)\\
&& Oracle    & 0.129 (0.004) & 0.014 (0.001) & 4 (0) & 11 (0) & 0 (0)\\\cline{2-8}
&200& PS     & 0.249 (0.004) & 0.008 (0.000) & 15 (0) & 0 (0) & 0 (0)\\
&&PSL        & 0.147 (0.004) & 0.008 (0.000) & 7.16 (0.09) & 7.84 (0.09) & 0.00 (0.00) \\
& &PSA       & 0.111 (0.004) & 0.008 (0.000) & 4.36 (0.04) & 10.64 (0.04)& 0.00 (0.00)\\
&& Oracle    & 0.063 (0.000) & 0.007 (0.000) & 4 (0) & 11 (0) & 0 (0)\\\hline\hline
1&100& PS    & 2.293 (0.040) & 0.055 (0.002) & 15 (0) & 0 (0) & 0 (0)\\
&&PSL        & 1.256 (0.032) & 0.051 (0.002) & 7.36 (0.09) & 7.64 (0.09) & 0.00 (0.00) \\
&&PSA        & 1.110 (0.036) & 0.051 (0.002) & 4.72 (0.05) & 10.25 (0.05)& 0.02 (0.00)\\
&& Oracle    & 0.511 (0.017) & 0.048 (0.002) & 4 (0) & 11 (0) & 0 (0)\\\cline{2-8}
&200& PS     & 0.989 (0.017) & 0.028 (0.001) & 15 (0) & 0 (0) & 0 (0)\\
&&PSL        & 0.587 (0.016) & 0.027 (0.001) & 7.20 (0.09) & 7.80 (0.09) & 0.00 (0.00) \\
& &PSA       & 0.479 (0.014) & 0.026 (0.001) & 4.42 (0.04) & 10.58 (0.04)& 0.00 (0.00)\\
&& Oracle    & 0.252 (0.008) & 0.026 (0.001) & 4 (0) & 11 (0) & 0\\ \hline
\end{tabular}
\end{center}
\smallskip
\end{table}

\begin{table}[H]
\begin{center}
\caption{Variable selection relative frequency in percentage over 500 runs for Model 1}
\smallskip
\begin{tabular}{|c|c|c|cc|rrrrrrrrrrr|c|}\hline
$\sigma$ & $n$& &\multicolumn{2}{c|}{important index} &
\multicolumn{11}{c|}{unimportant variable index} & P(correct)\\
\hline &&& $1-3$ &$4$ & $5$ & $6$ & $7$& $8$ & $9$ & $10$& $11$ &
$12$& $13$ & $14$ & $15$& \\ \hline
0.5&100& PSL &1&1&0.30&0.29&0.32&0.33& 0.30& 0.25& 0.32& 0.29& 0.32& 0.33& 0.30 & 0.09\\
&   & PSA &1&1&0.05& 0.05 &0.07& 0.06& 0.06& 0.03& 0.04& 0.03& 0.05& 0.06& 0.04 &0.70\\\cline{2-17}
&200& PSL &1&1&0.29& 0.29& 0.26& 0.31& 0.31& 0.28& 0.25& 0.30& 0.28& 0.29& 0.30& 0.10\\
&   & PSA &1&1&0.03& 0.03& 0.04& 0.04& 0.03& 0.03& 0.03& 0.04& 0.03& 0.03& 0.04& 0.78\\\hline\hline
1&100& PSL &1&1&0.30& 0.30& 0.32& 0.33& 0.31& 0.26& 0.32& 0.29& 0.32& 0.33& 0.31 & 0.08\\
&   & PSA &1&0.98&0.08& 0.07& 0.09& 0.08& 0.08& 0.04& 0.05& 0.05& 0.07& 0.08& 0.05 &0.55\\\cline{2-17}
&200& PSL &1&1&0.29& 0.29& 0.27& 0.32& 0.31& 0.29& 0.24& 0.31& 0.29& 0.29& 0.29 & 0.10\\
&   & PSA &1&1&0.03& 0.03& 0.05& 0.05& 0.04& 0.03& 0.03& 0.04& 0.04& 0.03& 0.04 &0.75\\\hline
\end{tabular}
\end{center}
\end{table}

Table 2 shows that the PSA works much better in distinguishing important variables from unimportant
variables than PSL. For example, when $\sigma=0.5$, the PSA
identifies the correct model $500\times 0.70=350$ times out of 500 times when
$n=100$ and $500\times 0.78=390$ times when $n=200$, while the PSL identifies the
correct model only $500\times 0.09=45$ times when $n=100$ and $500\times 0.10=50$ times when
$n=200$.

To present the performance of our nonparametric estimation
procedure, we plot the estimated functions for Model 1 in the
below Figure 1. The top row of Figure 1 depicts the typical
estimated curves corresponding to the $10$th best, the $50$th best
(median), and the $90$th best according to MISE among 100
simulations when $n=200$ and $\sigma=0.5$. It can be seen that the
fitted curves are overall able to capture the shape of the true
function very well. In order to describe the sampling variability
of the estimated nonparametric function at each point, we also
depict a $95\%$ pointwise confidence interval for $f$ in the
bottom row of Figure 1. The upper and lower bound of the
confidence interval are respectively given by the $2.5$th and
$97.5$th percentiles of the estimated function at each grid point
among 100 simulations. The results show that the function $f$ is
estimated with very good accuracy.

\begin{figure}[H]
\centering{
\includegraphics[angle=0,scale=0.7]{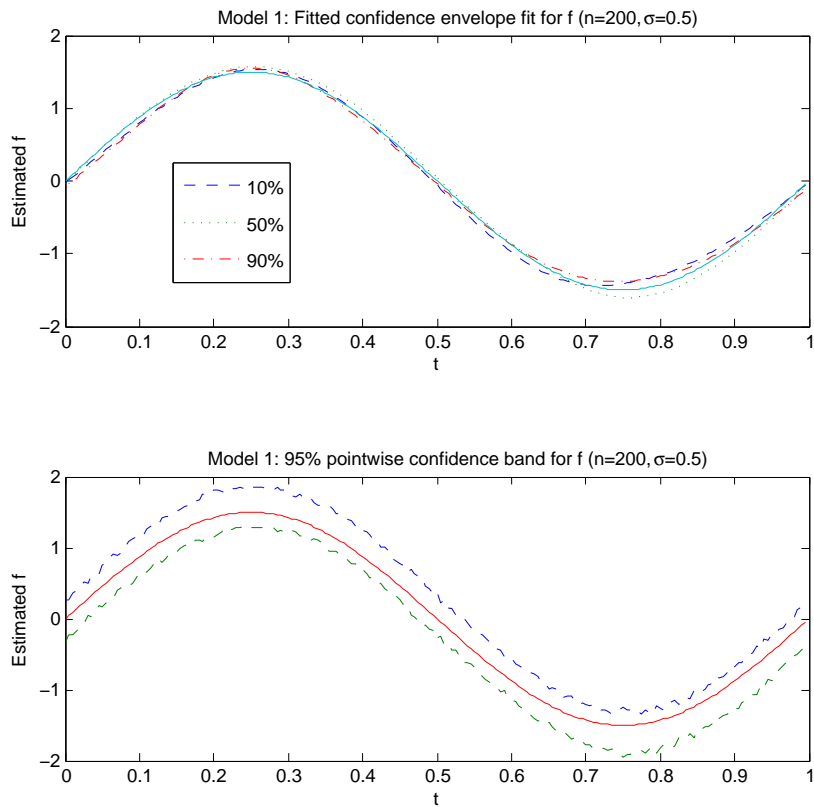}}
\caption{The estimated nonlinear functions given by the PSA in Model 1.}
\label{Ex1.plot}
{\small The estimated nonlinear function,
confidence envelop and 95\% point-wise confidence interval for
Model 2 with $n=200$ and $\sigma=0.5$. In the top plot, the dashed
line is for the $10$th best fit, the dotted line is for the $50$th
best fit, and the dashed-dotted line is for the $90$th best among
500 simulations. The bottom plot is a $95\%$ pointwise confidence
interval.}
\end{figure}

Tables 3 and 4, Tables 5 and 6 summarize the simulation results when the true parametric components are $\btv_1$
and $\btv_2$ respectively.
Tables 3 and 5 compare the model fitting and variable selection
performance in the correlated setting Model 2. The case $\rho=0.3$ represents a weak correlation among
$X$'s and $\rho=0.6$ represents a moderate situation. Again, we
observe that the PSA performs best in terms of both MSE and
variable selection in all settings. In particular, when $n=200$,
the PSA is very close to the ``Oracle'' results in this example.

Tables 4 and 6 compare the variable selection results of PSL and PSA in
four scenarios if the covariates are correlated. Since neither of
the methods misses any important variable over 500 runs, we only
report the selection relative frequencies for the unimportant variables. Overall,
the PSA results in a more sparse model and identifies the true
model with a much higher frequency. For example, when the true parametric component is $\btv_1$,
$n=100$ and
the correlation is moderate with $\rho=0.6$, the PSA identifies the
correct model with relative frequency $0.87$ (about $500\times 0.87=435$ times) while the PSL identifies
the correct model only $500\times 0.20=100$ times.
When the true parametric component is $\btv_2$,
PSA and PSL identify the correct model 405 and 90 times respectively.

\begin{table}[H]
\begin{center}
\caption{Model selection and fitting results for Model 2 when the true parameter vector is $\btv_1=\btv$.
PNSR $\approx 8.18$.}
\smallskip
\begin{tabular}{|c|c|c|c|c|r|c|c|}\hline
$\rho$ & $n$ &Method & MSE($\widehat{\btv}_{PSA})$)&
MISE($\hat{f}_{PSA}$) &\multicolumn{1}{c|}{Size} &
\multicolumn{2}{c|}{Number of Zeros}\\\hline & &&& &
&\multicolumn{1}{c}{correct 0}&\multicolumn{1}{c|}{incorrect
0}\\\hline
0.3&100& PS  & 0.416 (0.008) & 0.451 (0.002) & 20 (0) & 0 (0) & 0 (0)\\
&&PSL        & 0.299 (0.006) & 0.447 (0.002) & 12.99 (0.09) & 7.01 (0.09) & 0.00 (0.00) \\
&&PSA        & 0.204 (0.005) & 0.443 (0.002) & 10.29 (0.03) & 9.71 (0.03)& 0.00 (0.00)\\
&& Oracle    & 0.181 (0.004) & 0.444 (0.002) & 10 (0) & 10 (0) & 0
(0)\\\cline{2-8}
&200& PS     & 0.179 (0.003) & 0.408 (0.001) & 20 (0) & 0 (0) & 0 (0)\\
&&PSL        & 0.125 (0.003) & 0.406 (0.001) & 13.22 (0.08) & 6.78 (0.8) & 0.00 (0.00) \\
& &PSA       & 0.087 (0.002) & 0.404 (0.001) & 10.12 (0.02) & 9.88 (0.02)& 0.00 (0.00)\\
&& Oracle    & 0.082 (0.002) & 0.405 (0.001) & 10 (0) & 10 (0) & 0
(0)\\ \hline\hline
0.6&100& PS  & 0.721 (0.013) & 0.448 (0.002) & 20 (0) & 0 (0) & 0 (0)\\
&&PSL        & 0.401 (0.009) & 0.440 (0.002) & 11.91 (0.07) & 8.09 (0.07) & 0.00 (0.00) \\
& &PSA       & 0.349 (0.008) & 0.438 (0.002) & 10.22 (0.03) & 9.78 (0.03)& 0.00 (0.00)\\
&& Oracle    & 0.310 (0.004) & 0.439 (0.002) & 10 (0) & 10 (0) & 0
(0)\\\cline{2-8}
&200& PS     & 0.311 (0.005) & 0.408 (0.001) & 20 (0) & 0 (0) & 0 (0)\\
&&PSL        & 0.170 (0.004) & 0.405 (0.001) & 12.47 (0.07) & 7.53 (0.07) & 0.00 (0.00) \\
& &PSA       & 0.147 (0.004) & 0.404 (0.001) & 10.12 (0.02) & 9.88 (0.02)& 0.00 (0.00)\\
&& Oracle    & 0.139 (0.004) & 0.405 (0.001) & 10 (0) & 10 (0) & 0
(0)\\ \hline\hline
\end{tabular}
\end{center}
\smallskip
\end{table}

\begin{table}[H]
\begin{center}
\caption{Relative frequency of variables selected in 500 runs for Model 2 when the true parameter vector is $\btv_1=\btv$.
}
\smallskip
\begin{tabular}{|c|c|c|rrrrrrrrrr|c|}\hline
$\rho$& $n$&Method & \multicolumn{10}{c|}{unimportant variable index}& P(correct)
\\ \hline &&  & $11$ &12 & $13$ & $14$ & $15$ &
$16$ & $17$& $18$ & $19$ & $20$ & \\ \hline
0.3&100& PSL&0.35&0.34&0.34&0.34&0.35&0.33&0.34&0.35&0.34&0.36 & 0.11\\
  & & PSA &0.05&0.03&0.03&0.03&0.04&0.03&0.04&0.04&0.02&0.04&0.82 \\\cline{2-14}
&200& PSL &0.34&0.32&0.30&0.32&0.29&0.30&0.30&0.31&0.30&0.30& 0.14\\
 &  & PSA &0.02&0.01&0.01&0.01&0.01&0.01&0.01&0.01&0.01&0.01&0.91\\\hline\hline
0.6&100& PSL &0.32&0.26&0.24&0.25&0.25&0.23&0.22&0.25&0.24&0.26& 0.20\\
&   & PSA &0.06&0.02&0.01&0.02&0.02&0.01&0.02&0.02&0.02&0.01 &0.87\\\cline{2-14}
&200& PSL&0.29&0.23&0.20&0.18&0.19&0.20&0.18&0.17&0.19&0.20 & 0.25\\
 &  & PSA &0.02&0.01&0.01&0.00&0.00&0.00&0.01&0.00&0.00&0.00&0.96\\\hline
\end{tabular}
\end{center}
\end{table}

\begin{table}[H]
\begin{center}
\caption{Model selection and fitting results for Model 2 when the true parameter vector is $\btv_2=\btv/3$.
PNSR $\approx 2.72$.}
\smallskip
\begin{tabular}{|c|c|c|c|c|r|c|c|}\hline
$\rho$ & $n$ &Method & MSE($\widehat{\btv}_{PSA})$)&
MISE($\hat{f}_{PSA}$) &\multicolumn{1}{c|}{Size} &
\multicolumn{2}{c|}{Number of Zeros}\\\hline & &&& &
&\multicolumn{1}{c}{correct 0}&\multicolumn{1}{c|}{incorrect
0}\\\hline
0.3&100& PS  & 0.520 (0.008) & 0.432 (0.002) & 20 (0) & 0 (0) & 0 (0)\\
&&PSL        & 0.321 (0.005) & 0.430 (0.002) & 13.32 (0.08) & 6.68 (0.08) & 0.00 (0.00) \\
&&PSA        & 0.269 (0.005) & 0.425 (0.002) & 10.44 (0.05) & 9.56 (0.05)& 0.00 (0.00)\\
&& Oracle    & 0.230 (0.004) & 0.422 (0.002) & 10 (0) & 10 (0) & 0
(0)\\\cline{2-8}
&200& PS     & 0.226 (0.003) & 0.390 (0.001) & 20 (0) & 0 (0) & 0 (0)\\
&&PSL        & 0.141 (0.003) & 0.390 (0.001) & 13.11 (0.08) & 6.89 (0.08) & 0.00 (0.00) \\
& &PSA       & 0.116 (0.002) & 0.387 (0.001) & 10.38 (0.04) & 9.62 (0.04)& 0.00 (0.00)\\
&& Oracle    & 0.110 (0.002) & 0.386 (0.001) & 10 (0) & 10 (0) & 0
(0)\\ \hline\hline
0.6&100& PS  & 0.901 (0.013) & 0.432 (0.002) & 20 (0) & 0 (0) & 0 (0)\\
&&PSL        & 0.420 (0.007) & 0.425 (0.002) & 12.50 (0.08) & 7.95 (0.08) & 0.00 (0.00) \\
& &PSA       & 0.382 (0.007) & 0.419 (0.002) & 10.30 (0.04) & 9.70 (0.04)& 0.00 (0.00)\\
&& Oracle    & 0.350 (0.005) & 0.407 (0.002) & 10 (0) & 10 (0) & 0
(0)\\\cline{2-8}
&200& PS     & 0.382 (0.005) & 0.388 (0.001) & 20 (0) & 0 (0) & 0 (0)\\
&&PSL        & 0.192 (0.003) & 0.387 (0.001) & 12.20 (0.07) & 7.80 (0.07) & 0.00 (0.00) \\
& &PSA       & 0.181 (0.004) & 0.386 (0.001) & 10.17 (0.02) & 9.83 (0.02)& 0.00 (0.00)\\
&& Oracle    & 0.174 (0.003) & 0.385 (0.001) & 10 (0) & 10 (0) & 0
(0)\\ \hline\hline
\end{tabular}
\end{center}
\smallskip
\end{table}

\begin{table}[H]
\begin{center}
\caption{Relative frequency of variables selected in 500 runs for Model 2 when the true parameter vector is $\btv_2=\btv/3$.
}
\smallskip
\begin{tabular}{|c|c|c|rrrrrrrrrr|c|}\hline
$\rho$& $n$&Method & \multicolumn{10}{c|}{unimportant variable index}& P(correct)
\\ \hline &&  & $11$ &12 & $13$ & $14$ & $15$ &
$16$ & $17$& $18$ & $19$ & $20$ & \\ \hline
0.3&100& PSL&0.36&0.33&0.35&0.34&0.37&0.33&0.34&0.35&0.32&0.37 & 0.09\\
  & & PSA &0.06&0.04&0.05&0.05&0.03&0.04&0.05&0.05&0.05&0.05&0.78 \\\cline{2-14}
&200& PSL &0.34&0.32&0.31&0.35&0.32&0.32&0.32&0.30&0.31&0.33& 0.11\\
 &  & PSA &0.04&0.04&0.03&0.02&0.03&0.03&0.03&0.03&0.02&0.03&0.87\\\hline\hline
0.6&100& PSL &0.34&0.28&0.25&0.23&0.25&0.23&0.24&0.25&0.24&0.28& 0.18\\
&   & PSA &0.08&0.03&0.03&0.03&0.03&0.03&0.02&0.03&0.03&0.03 &0.81\\\cline{2-14}
&200& PSL&0.29&0.25&0.21&0.15&0.19&0.20&0.16&0.18&0.19&0.19 & 0.22\\
 &  & PSA &0.04&0.02&0.02&0.02&0.01&0.01&0.01&0.01&0.02&0.01&0.92\\\hline
\end{tabular}
\end{center}
\end{table}

The top rows of Figures 2 and 3 depict the typical estimated functions
corresponding to the $10$th best, the $50$th best (median), and
the $90$th best fits according to MISE among 100 simulations when
$n=200$, $\rho=0.3$, and the true Euclidean parameters are $\btv_1$ and $\btv_2$ respectively.
It is evident that the estimated curves
are able to capture the shape of the true function very well. The
bottom rows of Figures 2 and 3 depict the $95\%$ pointwise confidence
intervals for $f$. The results show that, when the true Euclidean parameters are
$\btv_1$ and $\btv_2$ respectively, the function $f$ is
estimated with reasonably good accuracy. Interestingly, in this simulation setting,
the choice of $\btv_1$ and $\btv_2$ does not affect much on the estimation accuracy of $f$.

\begin{figure}[H]
\centering{
\includegraphics[angle=0,scale=0.7]{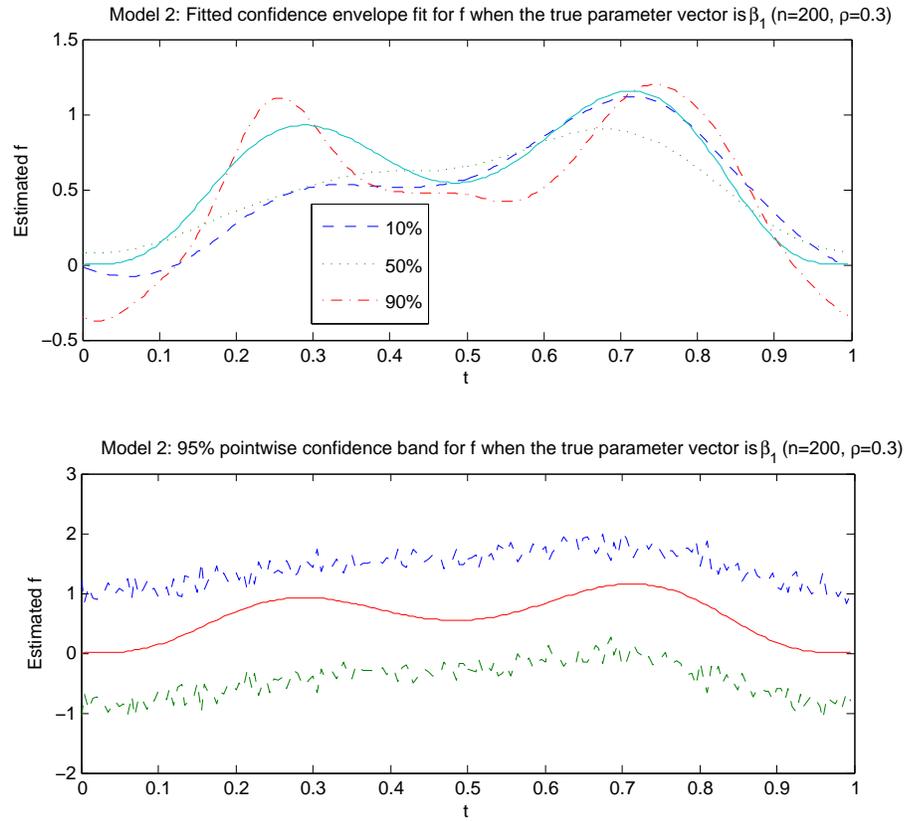}}
\caption{The estimated nonlinear functions given by the PSA in Model 2.}
\label{Ex2.plot}
{\small The estimated nonlinear function,
confidence envelop and 95\% point-wise confidence interval for
Model 2, $n=200$, $\rho=0.3$, and the true Euclidean parameter is $\btv_1$. In the top plot, the dashed
line is $10$th best fit, the dotted line is $50$th
best fit, and the dashed-dotted line is $90$th best of
500 simulations. The bottom plot is a $95\%$ pointwise confidence
interval.}
\end{figure}

\begin{figure}[H]
\centering{
\includegraphics[angle=0,scale=0.7]{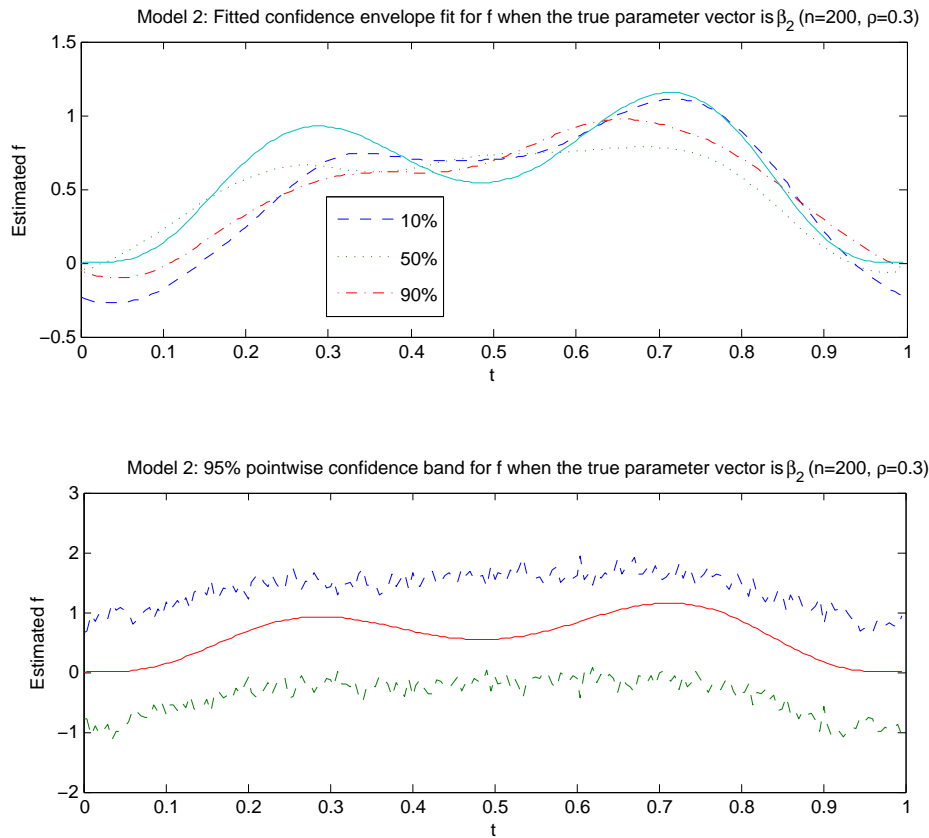}}
\caption{The estimated nonlinear functions given by the PSA in Model 2.}
\label{Ex2.plot}
{\small The estimated nonlinear function,
confidence envelop and 95\% point-wise confidence interval for
Model 2, $n=200$, $\rho=0.3$, and the true Euclidean parameter is $\btv_2$. In the top plot, the dashed
line is $10$th best fit, the dotted line is $50$th
best fit, and the dashed-dotted line is $90$th best of
500 simulations. The bottom plot is a $95\%$ pointwise confidence
interval.}
\end{figure}

\vspace{-0.6cm}\subsection{Simulation 2: Large dimensional setting} \vspace{-0.2in}

We consider an example involving a larger number of linear
variables: \vspace{-0.3cm}\begin{itemize} \item Model 3:
\vspace{0.2mm} Let $d=60, q=15$.
We considered two parameter vectors $\btv_1=\btv$ and $\btv_2=0.3\btv$,
and two nonparametric functions $f_1(t)=f(t)$ and $f_2(t)=0.5f(t)$,
with different magnitudes on the (non)parametric component representing ``weak" and ``strong" (non)parametric signals,
where $\btv=(4,4,4,4,4,3,3,3,3,3,2,2,2,2,2,0,\cdots,0)'$ and
$f(t)=0.2 t^{29}(1-t)^{16}/B(30,17)+0.8 t^{2}(1-t)^{10}/B(3,11)$.
In particular, the maximum absolute values of $f_1$ and $f_2$ are $\|f_1\|_{\sup}=3.08$ and $\|f_2\|_{\sup}=1.54$
respectively, and the $\ell_2$-norms of the $\btv_1$ and $\btv_2$ are $\|\btv_1\|=12.04$ and $\|\btv_2\|=3.61$ respectively.
So the ratio of $\|\btv_2\|$ to $\|f_1\|_{\sup}$ and $\|\btv_1\|$ to $\|f_2\|_{\sup}$,
i.e., the PNSR's,
are $\|\btv_2\|/\|f_1\|_{\sup}=1.17$
and $\|\btv_1\|/\|f_2\|_{\sup}=7.82$, representing the lower to higher PNSRs.
The correlated covariates
$(X_1,\cdots,X_{60})'$ are generated from marginally standard
normal with AR(1) correlation with $\rho=0.5$. Consider two
settings for the normal error $\epsilon_1\sim N(0,\sigma^2)$, with
$\sigma=0.5$ and $\sigma=1.5$, respectively.
\end{itemize}

\begin{table}[H]
\begin{center}
\caption{Variable selection and fitting results for Model 3 when the true parameter vector is
$\btv_2$ and the true function is $f_1$. PNSR $\approx 1.17$. SNRs $\approx 7.22$ and $2.41$ for $\sigma=0.5$ and $1.5$ respectively.}
\smallskip
\begin{tabular}{|c|c|c|c|c|r|r|c|}\hline
$\sigma$ & $n$ &Method & MSE($\widehat{\btv}$)& MISE($\hat{f}$) &\multicolumn{1}{c|}{Size}
& \multicolumn{2}{c|}{Number of Zeros}\\\hline
& &&& &
&\multicolumn{1}{c}{correct 0}&\multicolumn{1}{c|}{incorrect
0}\\\hline
0.5&100& PS     & 4.194 (0.059)& 1.139 (0.020)& 60 (0) & 0 (0) & 0 (0)  \\
   &   & PSL    & 0.526 (0.017)& 0.641 (0.011)& 27.86 (0.30) & 32.14 (0.30) & 0.00 (0.00) \\
   &   & PSA    & 0.304 (0.011)& 0.601 (0.011)& 21.43 (0.29) & 38.56 (0.29) & 0.00 (0.00) \\
   &   & Oracle & 0.225 (0.010)& 0.522 (0.011)& 15 (0) & 45 (0) & 0 (0)\\\cline{2-8}
   &200& PS     & 0.905 (0.010)& 0.851 (0.013)& 60 (0) & 0 (0) & 0 (0) \\
   &   & PSL    & 0.223 (0.007)& 0.618 (0.011)& 26.12 (0.22) & 33.88 (0.22) & 0.00 (0.00) \\
   &   & PSA    & 0.134 (0.004)& 0.548 (0.010)& 18.78 (0.20) & 41.22 (0.20) & 0.00 (0.00) \\
   &   & Oracle & 0.102 (0.002)& 0.478 (0.009)& 15 (0) & 45 (0) & 0 (0)\\\hline\hline
1.5&100& PS     & 10.014 (0.131)& 1.500 (0.027)& 60 (0) & 0 (0) & 0 (0) \\
   &   & PSL    & 2.702 (0.066) &1.256 (0.014)& 27.90 (0.35) & 32.10 (0.35) & 0.00 (0.00)\\
   &   & PSA    & 1.410 (0.040) &1.220 (0.012)& 21.70 (0.22) & 38.30 (0.22) & 0.00 (0.00)\\
   &   & Oracle & 1.038 (0.015) &1.128 (0.009)& 15 (0) & 45 (0) & 0 (0)\\\cline{2-8}
   &200& PS     & 2.440 (0.020) &1.091 (0.030)& 60 (0) & 0 (0) & 0 (0)\\
   &   & PSL    & 0.688 (0.011) &1.063 (0.003)& 26.00 (0.25) & 35.00 (0.25) & 0.00 (0.00)\\
   &   & PSA    & 0.471 (0.008) &1.052 (0.002)& 21.05 (0.20) & 38.95 (0.20) & 0.00 (0.00)\\
   &   & Oracle & 0.448 (0.006) &1.042 (0.002)& 15 (0) & 45 (0) & 0 (0)\\ \hline
\end{tabular}
\end{center}
\smallskip
\end{table}

\begin{table}[H]
\begin{center}
\caption{Variable selection and fitting results for Model 3 when the true parameter vector is
$\btv_1$ and the true function is $f_2$. PNSR $\approx 7.82$. SNRs $\approx 24.08$ and $8.03$ for $\sigma=0.5$ and $1.5$ respectively.}
\smallskip
\begin{tabular}{|c|c|c|c|c|r|r|c|}\hline
$\sigma$ & $n$ &Method & MSE($\widehat{\btv}_{PSA}$)& MISE($\hat{f}_{PSA}$) &\multicolumn{1}{c|}{Size} & \multicolumn{2}{c|}{Number of Zeros}\\\hline
& &&& & &\multicolumn{1}{c}{correct 0}&\multicolumn{1}{c|}{incorrect
0}\\\hline
0.5&100& PS     & 1.599 (0.034) & 0.321 (0.003) & 60 (0) & 0 (0) & 0 (0)\\
   &   & PSL    & 0.334 (0.019) & 0.241 (0.002) & 18.44 (0.06) & 41.56 (0.06) & 0.00 (0.00)\\
   &   & PSA    & 0.203 (0.011) & 0.232 (0.002) & 15.30 (0.03) & 44.70 (0.06) & 0.00 (0.00)\\
   &   & Oracle & 0.163 (0.002) & 0.220 (0.002) & 15 (0) & 45 (0) & 0 (0)\\\cline{2-8}
   &200& PS     & 0.379 (0.003) & 0.254 (0.001) & 60 (0) & 0 (0) & 0 (0)\\
   &   & PSL    & 0.112 (0.002) & 0.236 (0.001) & 17.19 (0.03) & 42.81 (0.03) & 0.00 (0.00)\\
   &   & PSA    & 0.094 (0.001) & 0.230 (0.001) & 15.02 (0.01) & 44.98 (0.01) & 0.00 (0.00)\\
   &   & Oracle & 0.068 (0.000) & 0.208 (0.001) & 15 (0) & 45 (0) & 0 (0) \\\hline\hline
1.5&100& PS     & 7.271 (0.083) & 0.539 (0.013) & 60 (0) & 0 (0) & 0 (0)\\
   &   & PSL    & 1.588 (0.035) & 0.459 (0.009) & 27.69 (0.22) & 32.31 (0.22) & 0.00 (0.00) \\
   &   & PSA    & 1.285 (0.028) & 0.434 (0.007) & 21.12 (0.19) & 38.88 (0.19) & 0.00 (0.00)\\
   &   & Oracle & 0.785 (0.011) & 0.395 (0.004) & 15 (0) & 45 (0) & 0 (0)\\\cline{2-8}
   &200& PS     & 1.886 (0.014) & 0.351 (0.003) & 60 (0) & 0 (0) & 0 (0)\\
   &   & PSL    & 0.571 (0.010) & 0.339 (0.002) & 26.35 (0.19) & 33.65 (0.19) & 0.00 (0.00)\\
   &   & PSA    & 0.472 (0.006) & 0.334 (0.002) & 19.08 (0.14) & 40.92 (0.14) & 0.00 (0.00)\\
   &   & Oracle & 0.342 (0.005) & 0.325 (0.002) & 15 (0) & 45 (0) & 0 (0)\\ \hline
\end{tabular}
\end{center}
\smallskip
\end{table}

Tables 7 and 8 compare the model fitting and variable selection
performance of various procedures in different settings for Model 3. In particular,
in Table 7, the true Euclidean parameter and nonparametric function are $\btv_2$ and $f_1$,
while in Table 8 they are $\btv_1$ and $f_2$.
So the PNSR's in Tables 7 and 8 are 1.17 and 7.82 respectively. To better illustrate the performance,
we considered $\sigma=0.5$ and $1.5$ in each table. The corresponding signal-to-noise ratios, defined as
the ratios of $\|\btv_1\|$ ($\|\btv_2\|$) to $\sigma$'s, are 24.08, 8.03, 7.22 and 2.41 in the four settings.
It is evident that the PSA procedure outperforms both the PS
and PSL in terms of both the MSE and variable selection accuracy. The three procedures give similar performance in estimating the
nonparametric function. In Figures 4 and 5,
we plotted the confidence envelop and the 95\% confidence band of $f_1$ and $f_2$ when $n=200$, $\sigma=0.5$,
and the true Euclidean parameters are $\btv_2$ and $\btv_1$ respectively.
All the figures demonstrate satisfactory coverage of the true unknown nonparametric
function by confidence envelops and pointwise confidence bands. We also conclude that,
at least in this simulation setup, for the two settings with different PNSR's,
the estimates of $f_1$ and $f_2$ are satisfactory.

\begin{figure}[H]
\centering{
\includegraphics[angle=0,scale=0.7]{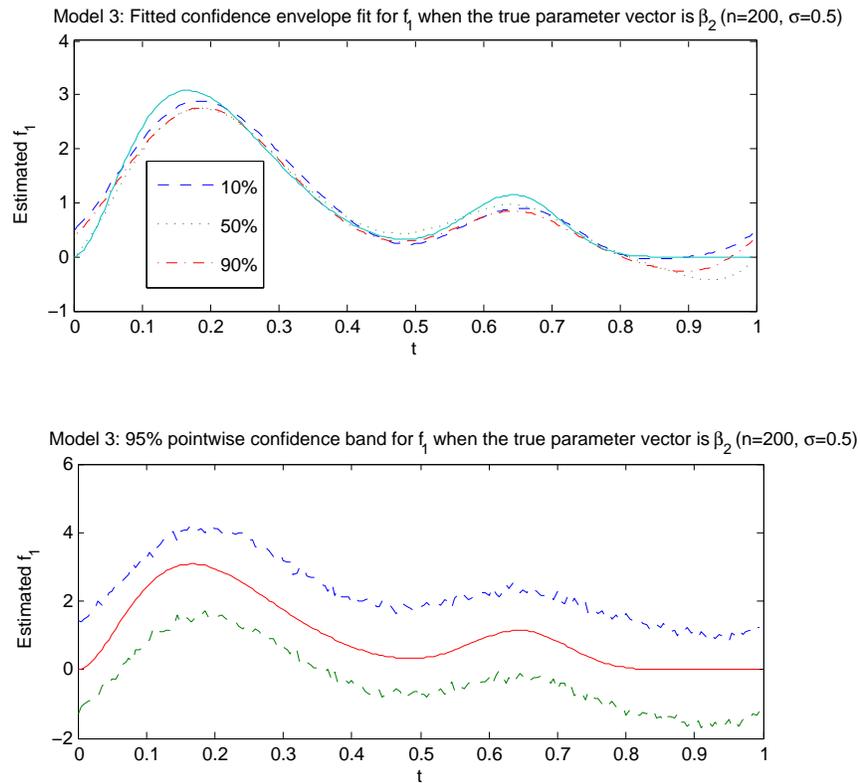}}
\caption{The estimated nonlinear functions given by the PSA in Model 3.}
\label{Ex3.plot1}
{\small The estimated nonlinear function,
confidence envelop and 95\% point-wise confidence interval for
Model 3 with true nonparametric function $f_1$ and true Euclidean parameter $\btv_2$,
$n=200$ and $\sigma=0.5$. In the top plot, the dashed
line is for the $10$th best fit, the dotted line is for the $50$th
best fit, and the dashed-dotted line is for the $90$th best among
500 simulations. The bottom plot is a $95\%$ pointwise confidence
interval.}
\end{figure}

\begin{figure}[H]
\centering{
\includegraphics[angle=0,scale=0.7]{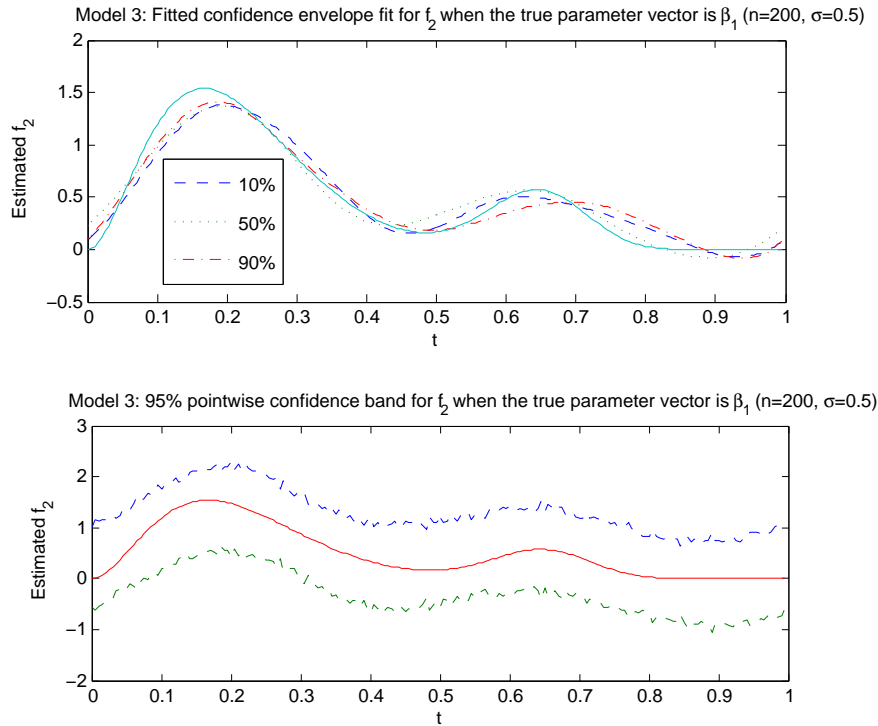}}
\caption{The estimated nonlinear functions given by the PSA in Model 3.}
\label{Ex3.plot2}
{\small The estimated nonlinear function,
confidence envelop and 95\% point-wise confidence interval for
Model 3 with true nonparametric function $f_2$ and true Euclidean parameter $\btv_1$,
$n=200$ and $\sigma=0.5$. In the top plot, the dashed
line is for the $10$th best fit, the dotted line is for the $50$th
best fit, and the dashed-dotted line is for the $90$th best among
500 simulations. The bottom plot is a $95\%$ pointwise confidence
interval.}
\end{figure}

\vspace{-0.8cm} \subsection{Real Example 1: Ragweed Pollen Data}\label{realData} \vskip
-0.6cm We apply the proposed method to the \textit{Ragweed Pollen}
data analyzed in \citet{r03}. The data consists of 87
daily observations of ragweed pollen level and relevant
information collected in Kalamazoo, Michigan during the 1993
ragweed season. The main purpose of this analysis is to develop an
accurate model for forecasting daily ragweed pollen level based on
some climate factors. The raw response $ragweed$ is the daily
ragweed pollen level (grains/$m^3$). There are four explanatory
variables: \vspace{-0.3cm}\begin{itemize} \item[] $X_1=$ rain: the
indicator of significant rain for the following day ($1=$ at least
3 hours of steady or brief but intense rain, $0=$ otherwise);
\item[] $X_2=$ temperature: temperature of the following day ($^o
F$); \item[] $X_3=$ wind: wind speed forecast for the following
day (knots); \item[] $X_4=$ day: the number of days in the current
ragweed pollen season.
\end{itemize}
\vspace{-0.3cm}We first standardize $X$-covariates. Since the raw
response is rather skewed, \citet{r03} suggested a
square root transformation $Y=\sqrt{ragweed}$. Marginal plots
suggest a strong nonlinear relationship between $Y$ and the {\it
day} number. Consequently, a partial linear model with a
nonparametric baseline $f(\text{day})$ is reasonable.
\citet{r03} fitted a semiparametric model with three linear effects
$X_1$, $X_2$ and $X_3$ and a nonlinear effect of $X_4$. For the
variable selection purpose, we add the quadratic terms in the
model and fit an enlarged model:
$$y=f(\mbox{day}) + \beta_1 x_1 + \beta_2 x_2 + \beta_3 x_3 + \beta_{4} x_2^2 + \beta_{4} x_3^2 + \varepsilon.$$
Table 9 gives the estimated regression coefficients. We observe that PSL and PSA end up
with the same model, and all the estimated coefficients are positive, suggesting that the ragweed pollen level increases as any
covariate increases. The shrinkage in parametric terms from the
partial spline models resulted from the PSA procedure is overall
smaller than that resulted from the PSL procedure.

\begin{table}[H]
\begin{center}
\caption{Estimated Coefficients for Ragweed Pollen Data}
\smallskip
\begin{tabular}{r|c|c|c}\hline
Covariate & PS & PSL & PSA \\\hline
rain & 1.3834 & 1.3620 & 1.3816\\
temperature & 0.1053  & 0.1045 & 0.1053 \\
wind & 0.2407 & 0.2384 & 0.2409\\
temp$\times$temp& 0.0042  & 0.0041 & 0.0041\\
wind$\times$wind &- 0.0004 & 0  & 0\\ \hline
\end{tabular}
\end{center}
\end{table}

Figure 6 depicts the estimated nonparametric function $\widehat{f}$(day)
and its 95\% pointwise confidence intervals given by the PSA. The
plot indicates that $\widehat{f}(\mathrm{day})$ increases rapidly to the
peak on around day 25, plunges until day 60, and decreases
steadily thereafter. The nonparametric fits given by the other two
procedures are similar and hence omitted in the paper.

We examined the prediction accuracy for PS, PSL and PSA,
in terms of the mean squared prediction errors (MSPE) based the leave-one-out strategy.
We also fit linear models for the above data using LASSO.
Our analysis shows that the MSPEs for PS, PSL and PSA are 5.63, 5.61 and 5.47 respectively,
while the MSPE for LASSO based on linear models is 12.40.
Roughly speaking, the MSPEs using PS, PSL and PSA are similar, though they provide
different model selection results summarized in Table 5. Notice that the PS method keeps more variables in the model than the PSL and PSA,
however the MSPEs are not much different. Thus, using PSL or PSA one can select a subgroup of significant variables to explain the model.
Furthermore,
the large MSPE based on linear models demonstrates invalidity of simply using linear models for such data.

\begin{figure}[H]
\centering{
\includegraphics[angle=0,width=0.6\textwidth,height=0.5\textwidth]{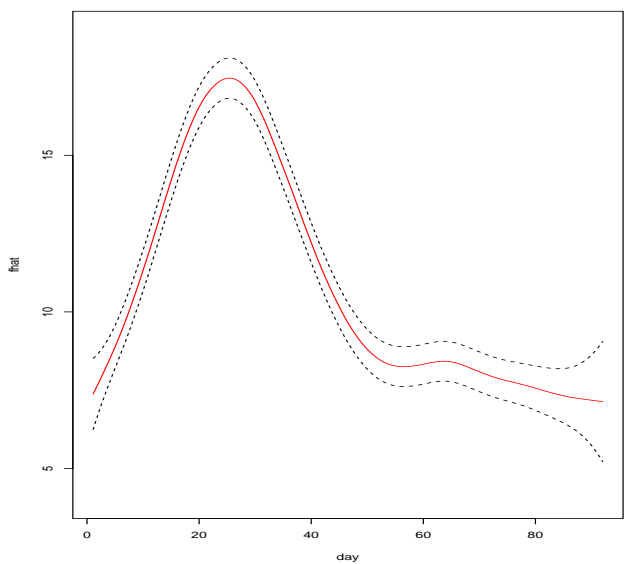}}
\caption{The estimated nonlinear function $\hat{f}(\text{day})$ for the ragweed pollen data.}
\label{real.plot}
{\small The estimated nonlinear function $\widehat{f}$(day) with
its $95\%$ pointwise confidence interval (dotted lines) given by the PSA for the ragweed pollen data.}
\end{figure}

\vspace{-0.8cm} \subsection{Real Example 2: Prostate Cancer Data}\label{realData} \vskip
-0.6cm We analyze the \textit{Prostate Cancer} data (\citet{s89}). The goal is
to predict the log level of prostate specific antigen using  a number of clinical measures.
The data consists of 97 men who were about to receive a radical prostatectomy.  There are eight predictors: $X_1=$ log cancer volume (lcavol),
$X_2=$ log prostate weight (lweight), $X_3=$ age, $X_4=$ log of benign prostatic hyperplasia amount (lbph),
$X_5=$ seminal vesicle invasion (svi), $X_6=$ log of capsular penetration (lcp), $X_7=$ Gleason score (gleason),
and $X_8=$ percent of Gleason scores of 4 or 5 (pgg45).

\begin{table}[H]
\begin{center}
\caption{Estimated Coefficients for Prostate Cancer Data}
\smallskip
\begin{tabular}{r|c|r|r}\hline
Covariate & PS & PSL & PSA \\\hline
lcavol & 0.587  & 0.443 & 0.562 \\
age & -0.020 & 0 & 0\\
lbph & 0.107  & 0 & 0\\
svi& 0.766 & 0.346  & 0.498\\
lcp & -0.105 & 0& 0\\
gleason & 0.045  & 0 & 0\\
pgg45& 0.005 & 0  & 0\\ \hline
\end{tabular}
\end{center}
\end{table}

A variable selection analysis was conducted in \citet{t96} using a linear regression model with LASSO,
and it selected three important variables {\sl lcavol, lweight, svi}
as important variables to predict the prostate specific antigen. We fitted partially linear models by treating {\sl lweight} as a
nonlinear term. Table 10 gives the estimated coefficients for different methods. Interestingly, both PSL and PSA select
{\sl lcavol} and {\sl svi} as important linear variables, which is consistent to the analysis by \citet{t96}.

\vspace{-0.8cm} \section{Discussion}\label{sec:disc} \vskip -0.6cm
We propose a new regularization method for simultaneous variable
selection for linear terms and component estimation for the nonlinear term in partial spline models. The
oracle properties of the new procedure for variable selection are
established. Moreover, we have shown that the new estimator can
achieve the optimal convergence rates for both the parametric and
nonparametric components. All the above conclusions are also
proven to hold in the increasing dimensional situation.

The proposed method sets up a basic framework to implement
variable selection for partial spline models, and it can be
generalized to other types of data analysis. In our future
research, we will generalize the results in this paper to the
generalized semiparametric models, robust linear regression, or
survival data analysis. In this paper, we assume the errors are
i.i.d. with constant variance, and the smoothness order of the Sobolev space is fixed as $m$,
though in practice we used $m=2$ to facilitate computation. In practice, the problem of
heteroscedastic error, i.e. the variance of $\epsilon$ is some
non-constant function of $(X,T)$, is often encountered. Meanwhile, the order $m$
may not be always available which needs to be approximated. We will
examine the latter two issues in the future.


\vspace{-0.6cm} \section{Proofs}\label{sec:proof} \vskip -0.6cm
For simplicity, we use $\widehat{\btv}$, $\widehat{\btv}_{1}$ $(\widehat{\btv}_{2})$ and $\hat{f}$
to represent $\widehat{\btv}_{PSA}$, $\widehat{\btv}_{PSA,1}$ $(\widehat{\btv}_{PSA,2})$ and $\hat{f}_{PSA}$,
in the proofs.

{\it Definition:} Let $\mathcal{A}$ be a subset of a (pseudo-)
metric space $(\mathcal{L},d)$ of real-valued functions. The
$\delta$-covering number $N(\delta,\mathcal{A},d)$ of
$\mathcal{A}$ is the smallest $N$ for which there exist functions
$a_{1},\ldots,a_{N}$ in $\mathcal{L}$, such that for each
$a\in\mathcal{A}$, $d(a,a_{j})\leq\delta$ for some
$j\in\{1,\ldots,N\}$. The $\delta$-bracketing number
$N_{B}(\delta,\mathcal{A},d)$ is the smallest $N$ for which there
exist pairs of functions
$\{[a_{j}^{L},a_{j}^{U}]\}_{j=1}^{N}\subset\mathcal{L}$, with
$d(a_{j}^{L},a_{j}^{U})\leq\delta$, $j=1,\ldots,N$, such that for
each $a\in\mathcal{A}$ there is a $j\in\{1,\ldots,N\}$ such that
$a_{j}^{L}\leq a\leq a_{j}^{U}$. The $\delta$-entropy number
($\delta$-bracketing entropy number) is defined as
$H(\delta,\mathcal{A},d)=\log N(\delta,\mathcal{A},d)$
($H_{B}(\delta,\mathcal{A},d)=\log N_{B}(\delta,\mathcal{A},d)$).

{\it Entropy Calculations:} For each $0<C<\infty$ and $\delta>0$,
we have \vspace{-.1in}\begin{eqnarray}
H_{B}(\delta,\{\eta:\|\eta\|_{\infty}\leq C, J_{\eta}\leq
C\},\|\cdot\|_{\infty})\leq M\left(\frac{C}{\delta}\right)^{1/m},\label{sob1}\\
H(\delta,\{\eta:\|\eta\|_{\infty}\leq C, J_{\eta}\leq
C\},\|\cdot\|_{\infty})\leq
M\left(\frac{C}{\delta}\right)^{1/m},\label{sob2}
\end{eqnarray}
where $\|\cdot\|_{\infty}$ represents the uniform norm and $M$ is
some positive number.

\vspace{-.1in}\begin{proof}[Proof of Theorem~\ref{brate}:]

In the proof of (\ref{betarate}), we will first show for any given
$\epsilon>0$, there exists a large constant $M$ such that
\begin{eqnarray}
P\left\{\inf_{\|\sv\|=M}\Delta(\sv)>0\right\}\geq
1-\epsilon,\label{inter00}
\end{eqnarray}
where $\Delta(\sv)\equiv Q(\btv_{0}+n^{-1/2}\sv)-Q(\btv_{0})$.
This implies with probability at least $(1-\epsilon)$ that there
exists a local minimum in the ball $\{\btv_{0}+n^{-1/2}\sv:
\|\sv\|\leq M\}$. Thus, we can conclude that there exists a local
minimizer such that $\|\hat{\btv}_{n}-\btv_{0}\|=O_{P}(n^{-1/2})$
if (\ref{inter00}) holds. Denote the quadratic part of $Q(\btv)$
as $L(\btv)$, i.e.,
$$L(\btv)=\frac{1}{n}(\yv-\Xv\btv)'[I-A(\lambda_1)](\yv-\Xv\btv).$$
Then we can obtain the below inequality:
\begin{eqnarray*}
\Delta(\sv)\geq
L(\btv_{0}+n^{-1/2}\sv)-L(\btv_{0})+\lambda_{2}\sum_{j=1}^{q}\frac{|\beta_{0j}+n^{-1/2}s_{j}|-|\beta_{0j}|}{|\tilde{\beta}_{j}|^{\gamma}},
\end{eqnarray*}
where $s_{j}$ is the $j$-th element of vector $\sv$. Note that
$L(\btv)$ is a quadratic function of $\btv$. Hence, by the Taylor
expansion of $L(\btv)$, we can show that
\begin{eqnarray}
\Delta(\sv)\geq
n^{-1/2}\sv'\dot{L}(\btv_{0})+\frac{1}{2}\sv'[n^{-1}\ddot{L}(\btv_{0})]\sv+
\lambda_{2}\sum_{j=1}^{q}\frac{|\beta_{0j}+
n^{-1/2}s_{j}|-|\beta_{0j}|}{|\tilde{\beta}_{j}|^
{\gamma}},\label{le3}
\end{eqnarray}
where $\dot{L}(\btv_{0})$ and $\ddot{L}(\btv_{0})$ are the first
and second derivative of $L(\btv)$ at $\btv_{0}$, respectively.
Based on (\ref{par3}), we know that
$-\dot{L}(\btv_{0})=(2/n)\Xv'[I-A(\lambda_{1})](\yv-\Xv\btv_{0})$
and $\ddot{L}(\btv_{0})=(2/n)\Xv'[I-A(\lambda_{1})]\Xv$. Combing
the proof of Theorem 1 and its four propositions in \citet{h86},
we can show that
\begin{eqnarray*}
n^{-1/2}\Xv'[I-A(\lambda_{1})](f_{0}+\epsilon)&&\overset{d}
{\longrightarrow}N(0,\sigma^{2}\bR),\\
n^{-1/2}\Xv'A(\lambda_{1})\epsilon&&\overset{P}{\longrightarrow}0.
\end{eqnarray*}
provided that $\lambda_{1}\rightarrow 0$ and
$n\lambda_{1}^{1/2m}\rightarrow\infty$. Therefore, the
Slutsky's theorem implies that
\vspace{-.1in}\begin{eqnarray}
\dot{L}(\btv_{0})&=&O_{P}(n^{-1/2}),\label{le1}\\
\ddot{L}(\btv_{0})&=&O_{P}(1)\label{le2}
\end{eqnarray}
given the above conditions on $\lambda_{1}$. Based on (\ref{le1})
and (\ref{le2}), we know the first two terms in the right hand
side of (\ref{le3}) are of the same order, i.e. $O_{P}(n^{-1})$.
And the second term, which converges to some positive constant,
dominates the first one by choosing sufficiently large $M$. The
third term is bounded by $n^{-1/2}\lambda_{2}M_{0}$ for some
positive constant $M_{0}$ since $\tilde{\beta}_{j}$ is the
consistent estimate for the nonzero coefficient for
$j=1,\ldots,q$. Considering that $\sqrt{n}\lambda_{2}\rightarrow
0$, we have completed the proof of (\ref{betarate}).

We next show the convergence rate for $\hat{f}$ in terms of
$\|\cdot\|_{n}$-norm, i.e. (\ref{fratee}). Let
$g_{0}(x,t)=\xv'\btv_{0}+f_{0}(t)$, and
$\hat{g}(x,t)=\xv'\hat{\btv}+\hat{f}(t)$. Then, by the definition
of $(\hat{\btv},\hat{f})$, we have \vspace{-.1in}\begin{eqnarray}
\|\hat{g}-g_{0}\|_{n}^{2}+\lambda_{1}J_{\hat{f}}^{2}+\lambda_{2}J_{\hat{\btv}}&\leq&
\frac{2}{n}\sum_{i=1}^{n}\epsilon_{i}(\hat{g}-g_{0})(X_{i},t_{i})+\lambda_{1}J_{f_{0}}^{2}+\lambda_{2}J_{\btv_{0}},
\label{ineq10}\\
\|\hat{g}-g_{0}\|_{n}^{2}&\leq&2\|\epsilon\|_{n}\|
\hat{g}-g_{0}\|_{n}+\lambda_{1}J_{f_{0}}^{2}+\lambda_{2}J_{\btv_{0}},\nonumber\\
\|\hat{g}-g_{0}\|_{n}^{2}&\leq&\|\hat{g}-g_{0}\|_{n}O_{P}(1)+o_{P}(1),\nonumber
\end{eqnarray}
where
$J_{\beta}\equiv\sum_{j=1}^{d}|\beta_{j}|/|\tilde{\beta}_{j}|^{\gamma}$.
The second inequality follows from the Cauchy-Schwartz inequality.
The last inequality holds since $\epsilon$ has sub-exponential
tail, and $\lambda_{1}, \lambda_{2}\rightarrow 0$. Then the above
inequality implies that $\|\hat{g}-g_{0}\|_{n}=O_{P}(1)$, so that
$\|\hat{g}\|_{n}=O_{P}(1)$. By Sobolev embedding theorem, we can
decompose $g(x,t)$ as $g_{1}(x,t)+g_{2}(x,t)$, where
$g_{1}(x,t)=x'\btv+\sum_{j=1}^{m}\alpha_{j}t^{j-1}$ and
$g_{2}(x,t)=f_{2}(t)$ with $\|g_{2}(x,t)\|_{\infty}\leq
J_{g_{2}}=J_{f}$. Similarly, we can write
$\hat{g}=\hat{g}_{1}+\hat{g}_{2}$, where
$\hat{g}_{1}=x'\hat{\btv}+\sum_{j=1}^{m}\hat{\alpha}_{j}t^{j-1}=\hat{\delta}'\phi$
and $\|\hat{g}_{2}\|_{\infty}\leq J_{\hat{g}}$. We shall now show
that $\|\hat{g}\|_{\infty}/(1+J_{\hat{g}})=O_{P}(1)$ via the above
Sobolev decomposition. Then
\begin{eqnarray}
\frac{\|\hat{g}_{1}\|_{n}}{1+J_{\hat{g}}}\leq\frac{\|\hat{g}\|_{n}}{1+J_{\hat{g}}}+\frac{\|\hat{g}_{2}\|_{n}}{1+J_{\hat{g}}}=O_{P}(1).\label{inter1}
\end{eqnarray}
Based on the assumption about $\sum_{k}\phi_{k}\phi_{k}'/n$,
(\ref{inter1}) implies that
$\|\hat{\delta}\|/(1+J_{\hat{g}})=O_{P}(1)$. Since $(X,t)$ is in a
bounded set, $\|\hat{g}_{1}\|_{\infty}/(1+J_{\hat{g}})=O_{P}(1)$.
So we have proved that
$\|\hat{g}\|_{\infty}/(1+J_{\hat{g}})=O_{P}(1)$. Thus, the
entropy calculation (\ref{sob1}) implies that
\vspace{-.1in}\begin{eqnarray*}
H_{B}\left(\delta,\left\{\frac{g-g_{0}}{1+J_{g}}:g\in\mathcal{G},\frac{\|g\|_{\infty}}{1+J_{g}}\leq
C\right\},\|\cdot\|_{\infty}\right)\leq M_{1}\delta^{-1/m},
\end{eqnarray*}
where $M_{1}$ is some positive constant, and
$\mathcal{G}=\{g(x,t)=x'\btv+f(t):\btv\in R^d, J_{f}<\infty\}$.
Based on Theorem 2.2 in \citet{mvg97} about the continuity modulus
of the empirical processes
$\{\sum_{i=1}^{n}\epsilon_{i}(g-g_{0})(z_{i})\}$ indexed by $g$
and (\ref{ineq10}), we can establish the following set of
inequalities: \vspace{-.05in}\begin{eqnarray}
\lambda_{1}J_{\hat{f}}^{2}&\leq&
\left[\|\hat{g}-g_{0}\|_{n}^{1-1/2m}(1+J_{\hat{f}})^{1/2m}\vee(1+J_{\hat{f}})n^{-\frac{2m-1}{2(2m+1)}}\right]O_{P}(n^{-1/2})\nonumber
\\&&+\lambda_{1}J_{f_{0}}^{2}+\lambda_{2}(J_{\btv_{0}}-J_{\hat{\btv}}),\label{inter2}
\end{eqnarray}
and
\begin{eqnarray}
\|\hat{g}-g_{0}\|_{n}^{2}&\leq&
\left[\|\hat{g}-g_{0}\|_{n}^{1-1/2m}(1+J_{\hat{f}})^{1/2m}\vee(1+J_{\hat{f}})n^{-\frac{2m-1}{2(2m+1)}}\right]O_{P}(n^{-1/2})\label{inter3}
\\&&+\lambda_{1}J_{f_{0}}^{2}+\lambda_{2}(J_{\btv_{0}}-J_{\hat{\btv}}).\nonumber
\end{eqnarray}
Note that \vspace{-.1in}\begin{eqnarray}
\lambda_{2}(J_{\btv_{0}}-J_{\hat{\btv}})&\leq&\lambda_{2}\sum_{j=1}^{q}\frac{|\beta_{0j}-\hat{\beta}_{j}|}{|\tilde{\beta}_{j}|^{\gamma}}+
\lambda_{2}\sum_{j=q+1}^{d}\frac{|\beta_{0j}-\hat{\beta}_{j}|}{|\tilde{\beta}_{j}|^{\gamma}}\nonumber\\
&\leq& O_{P}(n^{-2m/(2m+1)}).\label{inter4}
\end{eqnarray}
(\ref{inter4}) in the above follows from
$\|\widehat{\btv}-\btv_{0}\|=O_{P}(n^{-1/2})$ and (\ref{smooth}).
Thus, solving the above two inequalities gives
$\|\hat{g}-g_{0}\|_{n}=O_{P}(\lambda_{1}^{1/2})$ and
$J_{\hat{f}}=O_{P}(1)$ when
$n^{2m/(2m+1)}\lambda_{1}\rightarrow\lambda_{10}>0$. Note that
$$\|X'(\hat{\btv}-\btv_{0})\|_{n}=\sqrt{(\hat{\btv}-\btv_{0})'(\sum_{i=1}^{n}X_{i}X'_{i}/n)(\hat{\btv}-\btv_{0})}\aplt\|\hat{\btv}
-\btv_{0}\|=O_{P}(n^{-1/2})$$ by (\ref{betarate}). Applying the
triangle inequality to
$\|\hat{g}-g_{0}\|_{n}=O_{P}(\lambda_{1}^{1/2})$, we have proved
that $\|\hat{f}-f_{0}\|_{n}=O_{P}(\lambda_{1}^{1/2})$.

We next prove 3(a). It suffices to show that
\begin{eqnarray}
Q\{(\bar{\btv}_{1},\zerov)\}=\min_{\|\bar{\btv}_{2}\|\leq
Cn^{-1/2}}Q\{(\bar{\btv}_{1},\bar{\btv}_{2})\}\;\;\mbox{with
probability approaching to 1}\label{inter5}
\end{eqnarray}
for any $\bar{\btv}_{1}$ satisfying
$\|\bar{\btv}_{1}-\btv_{1}\|=O_{P}(n^{-1/2})$ based on
(\ref{betarate}). To show (\ref{inter5}), we need to show
that $\partial Q(\btv)/\partial\beta_{j}<0$ for
$\beta_{j}\in(-Cn^{-1/2},0)$, and $\partial
Q(\btv)/\partial\beta_{j}>0$ for $\beta_{j}\in(0,Cn^{-1/2})$, for
$j=q+1,\ldots,d$, holds with probability tending to 1. By two term
Taylor expansion of $L(\btv)$ at $\btv_{0}$, $\partial
Q(\btv)/\partial\beta_{j}$ can be expressed in the following form
for $j=q+1,\ldots,d$:
\begin{eqnarray*}
\frac{\partial Q(\btv)}{\partial\beta_{j}}=\frac{\partial
L(\btv_{0})}{\partial\beta_{j}}+\sum_{k=1}^{d}\frac{\partial^{2}L(\btv_{0})}{\partial\beta_{j}\partial\beta_{k}}(\beta_{k}-\beta_{0k})
+\lambda_{2}\frac{1\times
sgn(\beta_{j})}{|\tilde{\beta}_{j}|^{\gamma}},
\end{eqnarray*}
where $\beta_{k}$ is the $k^{\text{th}}$ element of vector $\btv$.
Note that $\|\btv-\btv_{0}\|=O_{P}(n^{-1/2})$ by the above
constructions. Hence , we have
\begin{eqnarray*}
\frac{\partial
Q(\btv)}{\partial\beta_{j}}=O_{P}(n^{-1/2})+sgn(\beta_{j})\frac{\lambda_{2}}{|\tilde{\beta}_{j}|^{\gamma}}
\end{eqnarray*}
by (\ref{le1}) and (\ref{le2}) in the above. The assumption (\ref{smooth}) implies that
$\sqrt{n}\lambda_{2}/|\tilde{\beta}_{j}|^{\gamma}\rightarrow\infty$
for $j=q+1,\ldots,d$. Thus, the sign of $\beta_{j}$
determines that of $\partial Q(\btv)/\partial\beta_{j}$ for
$j=q+1,\ldots,d$. This completes the proof of 3(a).

Now we prove 3(b). Following similar proof of (\ref{betarate}), we can show that there exists a
$\sqrt{n}$ consistent local minimizer of $Q(\btv_{1},0)$, i.e.
$\hat{\btv}_{1}$, and satisfies:
\begin{eqnarray*}
\frac{\partial Q(\btv)}{\partial\beta_{j}}|_{\btv=(\hat{\btv}_{1},
\zerov)}=0
\end{eqnarray*}
for $j=1,\ldots,q$. By similar analysis in the above, we can
establish the equation:
\begin{eqnarray*}
0=\frac{\partial
L(\btv_{0})}{\partial\beta_{j}}+\sum_{k=1}^{q}\left\{\frac{\partial^{2}L(\btv_{0})}{\partial\beta_{j}\partial\beta_{k}}\right\}(\hat{\beta}_{k}-
\beta_{0k})+\lambda_{2}\frac{1\times
sgn(\hat{\beta}_{j})}{|\tilde{\beta}_{j}|^{\gamma}},
\end{eqnarray*}
for $j=1,\ldots,q$. Note that the assumption
$\sqrt{n}\lambda_{2}\rightarrow 0$ implies that the third term in
the right hand side of the above equation is $o_{P}(n^{-1/2})$. By
the form of $L(\btv)$ and the Slutsky's
theorem, we conclude the proof of 3(b).
\end{proof}

\vspace{-.2in}\begin{proof}[Important Lemmas]


We provide three useful matrix inequalities and two lemmas for
preparing the proofs of Theorems~\ref{consithm} and
\ref{asynorthm}. Given any $n\times m$ matrix $\Av$ and symmetric
strictly positive definite matrix $\Bv$, $n\times 1$ vector $\sv$
and $\zv$, and $m\times 1$ vector $\wv$, we have
\begin{eqnarray}
|\sv'\Av\wv|&\leq&\|\sv\|\|\Av\|\|\wv\|\label{ineq1}\\
|\sv'\Bv\zv|&\leq&|\sv'\Bv\sv|^{1/2}|\zv'\Bv\zv|^{1/2}\label{ineq3}\\
|\sv'\zv|&\leq&\|\sv\|\|\zv\|\label{ineq2}
\end{eqnarray}
where $\|\Av\|^2=\sum_{j}\sum_{i}a_{ij}^{2}$. (\ref{ineq3})
follows from the Cauchy-Schwartz inequality.

\begin{lemma}
Given that $\lambda_{1}\rightarrow 0$, we have
\begin{eqnarray}
n^{-k/2}\sum_{l=1}^{n}|[(I-A)\fv_{0}(t)]_{l}|^{k}=O(\lambda_{1}^{k/2})\;\;\;\;\;\;\mbox{for}\;\;k=2,3,\ldots\label{i-a}
\end{eqnarray}
\end{lemma}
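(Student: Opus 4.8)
The plan is to reduce the bound for general $k$ to the single case $k=2$, which is the classical squared-bias estimate for a smoothing spline, and then to pass from $k=2$ to $k\ge 3$ by the elementary monotonicity of discrete $\ell^{p}$-norms.

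First I would record the $k=2$ estimate. Write $\fv_{0}=(f_{0}(t_{1}),\ldots,f_{0}(t_{n}))'$. Since (by the formula in Section~4) the influence matrix $A(\lambda_{1})$ does not involve $\Xv$, the vector $A(\lambda_{1})\fv_{0}$ consists of the fitted values of the noiseless smoothing spline $\tilde f$, i.e.\ of the minimizer over $f\in W_{m}$ of $\frac1n\sum_{i=1}^{n}(f_{0}(t_{i})-f(t_{i}))^{2}+\lambda_{1}J_{f}^{2}$. Using $f=f_{0}$ itself as a competitor in this penalized least-squares problem gives
\[
\frac1n\big\|(I-A)\fv_{0}\big\|^{2}+\lambda_{1}J_{\tilde f}^{2}\ \le\ \lambda_{1}J_{f_{0}}^{2},
\]
hence $\frac1n\sum_{l=1}^{n}\big|[(I-A)\fv_{0}]_{l}\big|^{2}=\|(I-A)\fv_{0}\|_{n}^{2}\le\lambda_{1}J_{f_{0}}^{2}=O(\lambda_{1})$, because $f_{0}\in W_{m}[0,1]$ makes $J_{f_{0}}$ finite. (Equivalently, this is the smoothing-spline bias bound already used to obtain (\ref{fratee}); cf.\ \citet{h86}.) Note that only $\lambda_{1}\to 0$ is used here, consistent with the hypothesis of the lemma.

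Next I would handle $k\ge 3$ by a one-line norm comparison: for any $x\in\R^{n}$ and any $k\ge 2$ one has $\sum_{l=1}^{n}|x_{l}|^{k}\le\big(\sum_{l=1}^{n}|x_{l}|^{2}\big)^{k/2}$ (normalize so that $\sum_{l}|x_{l}|^{2}=1$; then every $|x_{l}|\le 1$, so $|x_{l}|^{k}\le|x_{l}|^{2}$, and sum). Applying this with $x=(I-A)\fv_{0}$ and dividing by $n^{k/2}$,
\[
n^{-k/2}\sum_{l=1}^{n}\big|[(I-A)\fv_{0}]_{l}\big|^{k}\ \le\ \Big(\tfrac1n\sum_{l=1}^{n}\big|[(I-A)\fv_{0}]_{l}\big|^{2}\Big)^{k/2}=\big(\|(I-A)\fv_{0}\|_{n}^{2}\big)^{k/2}\le\big(\lambda_{1}J_{f_{0}}^{2}\big)^{k/2},
\]
which is $O(\lambda_{1}^{k/2})$. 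Together with the $k=2$ case this yields (\ref{i-a}) for every $k\ge 2$.

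I do not expect a genuine obstacle here: once the $\ell^{2}$-level bias bound is in hand, everything else is the above norm inequality. The only steps deserving care are the identification of $A(\lambda_{1})\fv_{0}$ with a noiseless smoothing-spline fit, so that the variational inequality applies, and the finiteness of $J_{f_{0}}$, which is guaranteed by $f_{0}\in W_{m}[0,1]$. In particular the design condition (\ref{tcond}) is not needed for this lemma.
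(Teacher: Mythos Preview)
Your proposal is correct. For $k=2$ you recover exactly the bias bound the paper cites (Lemma~2 of \citet{h86}, equivalently Lemma~4.1 of \citet{cw79}), via the variational characterization of the smoothing spline; this is fine and is how that classical bound is proved anyway.

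For $k\ge 3$ your route differs from the paper's. The paper proceeds by induction on $k$: it writes
\[
n^{-k/2}\sum_{l}|x_{l}|^{k}\ \le\ \Big(n^{-1/2}\max_{l}|x_{l}|\Big)\cdot\Big(n^{-(k-1)/2}\sum_{l}|x_{l}|^{k-1}\Big),
\]
bounds $n^{-1/2}\max_{l}|x_{l}|$ by $(n^{-1}\sum_{l}x_{l}^{2})^{1/2}=O(\lambda_{1}^{1/2})$, and invokes the inductive hypothesis for $k-1$. You instead use the single $\ell^{p}$-norm inequality $\sum_{l}|x_{l}|^{k}\le(\sum_{l}x_{l}^{2})^{k/2}$, which collapses the induction into one line. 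Both arguments reduce everything to the $k=2$ case and yield the same constant up to $J_{f_{0}}^{k}$; your version is slightly more direct, while the paper's splitting makes the ``max term $\times$ lower-moment'' structure explicit, which can be convenient if one later needs mixed-moment bounds. Either way there is no gap.
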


{\bf Proof:} For the case of $k=2$, it has been proved in Lemma 2
of \citet{h86}. Next we apply the principle of mathematical
induction to prove the cases for arbitrary $k>2$. We first assume
that
\begin{eqnarray}
n^{-(k-1)/2}\sum_{l=1}^{n}|[(I-A)\fv_{0}(t)]_{l}|^{k-1}=O(\lambda_{1}^{(k-1)/2})\label{mathind}
\end{eqnarray}
for $k=3$. Then we can write
\begin{eqnarray*}
&&n^{-k/2}\sum_{l=1}^{n}|[(I-A)\fv_{0}(t)]_{l}|^{k}\\
\leq&& n^{-1/2}\max_{l=1,\ldots,n}|[(I-A)\fv_{0}(t)]_{l}|\times
n^{-(k-1)/2}\sum_{l=1}^{n}|[(I-A)\fv_{0}(t)]_{l}|^{k-1}\\
\leq&&n^{-1/2}\left[\sum_{l=1}^{n}[(I-A)\fv_{0}(t)]_{l}^{2}\right]^{1/2}\times
O(\lambda_{1}^{(k-1)/2})=O(\lambda_{1}^{k/2}).
\end{eqnarray*}
The last step follows from (\ref{mathind}) and the case for $k=2$.
$\Box$

\begin{lemma}\label{apple1}
Given that $d_{n}\le n^{1/2}\wedge n\lambda_{1}^{1/2m}$, we have
\begin{eqnarray}
\left[\Xv'A(\lambda_{1})\epsv\right]_{i}&=&
O_{P}(\lambda_{1}^{-1/4m}),\label{inter1a}\\
\left[\Xv'((I-A(\lambda_{1}))\fv_{0}+\epsv)\right]_{i}&=&O_{P}(n^{1/2}),\label{inter4a}\\
\left[\Xv'(I-A(\lambda_{1}))\Xv/n\right]_{ij}&=&\bR_{ij}+O_{P}(n^{-1/2}\vee
n^{-1}\lambda_{1}^{-1/2m}),\label{inter2a}\\
\|\Xv'(I-A(\lambda_{1}))\Xv/n-R\|&=&o_{P}(1).\label{inter3a}
\end{eqnarray}
\end{lemma}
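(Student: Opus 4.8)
The plan is to treat the four estimates in sequence, since \eqref{inter4a} and \eqref{inter2a} both build on the entrywise bound \eqref{inter1a}, and \eqref{inter3a} follows from \eqref{inter2a} by summing over the $O(d_n^2)$ entries. Throughout I write $A=A(\lambda_1)$ and exploit the spectral decomposition of the smoother matrix: since $A=I-n\lambda_1 F_2(F_2'VF_2)^{-1}F_2'$ with $V=\Sigma+n\lambda_1 I$, the matrix $A$ is symmetric with eigenvalues in $[0,1]$, and $I-A$ is positive semidefinite with $\mathrm{tr}(I-A)=O(\lambda_1^{-1/2m})$ (the effective degrees of freedom of a smoothing spline of order $m$ under the design condition \eqref{tcond}). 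These facts, together with the operator-norm bound $\|A\|_{\mathrm{op}}\le 1$, are the workhorses; I will cite the relevant estimates from \citet{h86,w84,g02} rather than rederiving them.

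First I would prove \eqref{inter1a}. Fix a coordinate $i$ and let $\xv^{(i)}=(x_{1i},\ldots,x_{ni})'$ be the $i$th column of $\Xv$, so $[\Xv'A\epsv]_i=(\xv^{(i)})'A\epsv$. Conditionally on the design, this is mean-zero with variance $\sigma^2 (\xv^{(i)})'A^2\xv^{(i)}\le \sigma^2 (\xv^{(i)})'A\xv^{(i)}$. Because the columns are standardized ($\sum_k x_{ki}^2=n$) and $A$ projects onto an effectively $O(\lambda_1^{-1/2m})$-dimensional space of smooth vectors, one gets $(\xv^{(i)})'A\xv^{(i)}=O(\lambda_1^{-1/2m})$ — this is exactly the content of the bound $n^{-1/2}\Xv'A\epsv\overset{P}{\to}0$ used in the proof of Theorem~\ref{brate}, made quantitative. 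The sub-exponential tail condition R1 then upgrades the $L^2$ bound to an $O_P$ statement (via a maximal/Bernstein inequality if one wants it uniformly in $i$), giving $[\Xv'A\epsv]_i=O_P(\lambda_1^{-1/4m})$.

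Next, \eqref{inter4a}: write $\Xv'((I-A)\fv_0+\epsv)=\Xv'(I-A)\fv_0+\Xv'\epsv-\Xv'A\epsv$. The middle term is $O_P(n^{1/2})$ coordinatewise by the CLT (columns standardized, errors i.i.d.); the last term is $O_P(\lambda_1^{-1/4m})=o_P(n^{1/2})$ by \eqref{inter1a} under $n\lambda_1^{1/2m}\to\infty$; and for the first term, Cauchy--Schwarz gives $|[\Xv'(I-A)\fv_0]_i|\le \|\xv^{(i)}\|\,\|(I-A)\fv_0\|=n^{1/2}\cdot O(\lambda_1^{1/2})$ by Lemma with $k=2$ (equation \eqref{i-a}), which is $O(n^{1/2})$. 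For \eqref{inter2a}, expand $[\Xv'(I-A)\Xv/n]_{ij}=(\xv^{(i)})'\xv^{(j)}/n-(\xv^{(i)})'A\xv^{(j)}/n$; the first piece converges to $\bR_{ij}$ at rate $O_P(n^{-1/2})$ by the LLN/CLT, while $|(\xv^{(i)})'A\xv^{(j)}/n|\le \big((\xv^{(i)})'A\xv^{(i)}\big)^{1/2}\big((\xv^{(j)})'A\xv^{(j)}\big)^{1/2}/n=O_P(\lambda_1^{-1/2m}/n)$ by \eqref{ineq3} and the quadratic-form bound above. Finally \eqref{inter3a} follows by Frobenius-norm bookkeeping: $\|\Xv'(I-A)\Xv/n-\bR\|^2\le \sum_{i,j}\big([\Xv'(I-A)\Xv/n]_{ij}-\bR_{ij}\big)^2=d_n^2\cdot O_P\big(n^{-1}\vee n^{-2}\lambda_1^{-1/m}\big)$, which is $o_P(1)$ precisely under the hypothesis $d_n\le n^{1/2}\wedge n\lambda_1^{1/2m}$ (one should check the edge case and note the lemma likely needs $d_n=o(\cdot)$ rather than $O(\cdot)$ here, or an extra logarithmic factor, to make the $o_P$ strict).

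The main obstacle is the quadratic-form estimate $(\xv^{(i)})'A\xv^{(i)}=O_P(\lambda_1^{-1/2m})$ and its uniformity over $i=1,\ldots,d_n$: it is not enough to know $\mathrm{tr}(A)=O(\lambda_1^{-1/2m})$; one needs that a generic standardized covariate column has projection onto the high-eigenvalue eigenspace of $A$ of the right order, which requires controlling the alignment between the random design vectors and the (deterministic, design-$t$-driven) spline eigenfunctions. This is where the i.i.d.\ structure of $\Xv_i$, the independence of $\epsilon$ and $X$, and the sub-exponential tail all get used, and where borrowing the four propositions behind Theorem~1 of \citet{h86} does the heavy lifting; extending those fixed-$d$ arguments to hold uniformly over a growing number $d_n$ of coordinates is the only genuinely new work.
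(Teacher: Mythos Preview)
Your overall architecture (proving the four displays in order, with \eqref{inter3a} deduced from \eqref{inter2a} by a Frobenius-norm sum) matches the paper. The substantive divergence is in how you handle the randomness of $\Xv$, and this is where both your self-identified ``main obstacle'' and an actual error sit.

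You condition on the design and then try to control the \emph{conditional} quadratic form $(\xv^{(i)})'A\xv^{(i)}$. The paper never does this: it computes \emph{unconditional} second moments, using that the rows of $\Xv$ are i.i.d.\ mean-zero with covariance $\bR$. For \eqref{inter1a} this gives directly $\mathrm{Var}\big[(\Xv'A\epsv)_i\big]=\sigma^2\bR_{ii}\,\mathrm{tr}(A^2)=O(\lambda_1^{-1/2m})$, with no ``alignment'' issue whatsoever; for \eqref{inter2a} it computes $E\big[(\Xv'A\Xv)_{ij}\big]$ and $E\big[(\Xv'A\Xv)_{ij}^2\big]$ explicitly as polynomials in $\mathrm{tr}(A)$, $\mathrm{tr}(A^2)$, $\sum_r A_{rr}^2$ and low moments of $X$. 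Your heuristic ``columns are standardized and $A$ has effective rank $O(\lambda_1^{-1/2m})$, hence $(\xv^{(i)})'A\xv^{(i)}=O(\lambda_1^{-1/2m})$'' is false for a fixed vector (take $\xv^{(i)}$ aligned with the top eigenvector of $A$); it becomes true only in expectation over $X$, i.e.\ $E[(\xv^{(i)})'A\xv^{(i)}]=\bR_{ii}\,\mathrm{tr}(A)$, which is exactly the unconditional computation. So the obstacle you flag is an artifact of conditioning; the paper's route removes it in one line.

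There is also a concrete gap in your treatment of \eqref{inter4a}. Your Cauchy--Schwarz bound $|[\Xv'(I-A)\fv_0]_i|\le\|\xv^{(i)}\|\,\|(I-A)\fv_0\|$ drops a factor: $\|(I-A)\fv_0\|$ in the Euclidean norm is $O(n^{1/2}\lambda_1^{1/2})$, not $O(\lambda_1^{1/2})$, so the bound is $O(n\lambda_1^{1/2})=O(n^{(m+1)/(2m+1)})$, which is \emph{not} $O(n^{1/2})$ for any $m\ge 1$. The paper instead writes $[\Xv'((I-A)\fv_0+\epsv)]_i=\sqrt{n}\sum_j W_{ij}$ with $W_{ij}=n^{-1/2}X_{ij}([(I-A)\fv_0]_j+\epsilon_j)$ and applies Lindeberg's CLT, using $E[X_{ij}]=0$ so that the bias term $(I-A)\fv_0$ contributes only $\bR_{ii}\,n^{-1}\sum_j[(I-A)\fv_0]_j^2=O(\lambda_1)\to 0$ to the variance. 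Your decomposition can be salvaged the same way---treat $[\Xv'(I-A)\fv_0]_i$ as a mean-zero sum in the random $X_{ij}$'s with variance $O(n\lambda_1)$---but Cauchy--Schwarz alone does not suffice.
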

{\bf Proof:} We first state the Lemma 4.1 and 4.3 in \citet{cw79}:
\begin{eqnarray}
n^{-1}\sum_{j}[(I-A)\fv_{0}]_{j}^{2}\leq\lambda_{1}\int_{0}^{1}(f_{0}^{(m)}(t))^{2}dt,\label{cwres0}\\
tr(A)=O(\lambda_{1}^{-1/2m})\;\;\;\mbox{and}\;\;\;tr(A^{2})=O(\lambda_{1}^{-1/2m}).\label{cwres}
\end{eqnarray}
By the fact that
$Var[(\Xv'A\epsv)_{i}]=\sigma^{2}R_{ii}tr(A^{2})$, we can show
that $[\Xv'A\epsv]_{i}=O_{P}(\lambda_{1}^{-1/4m})$ based on
(\ref{cwres}), thus proved (\ref{inter1a}). We first write the
left hand side of (\ref{inter4a}) as
$\sqrt{n}\sum_{j=1}^{n}W_{ij}$, where
\begin{eqnarray*}
W_{ij}=n^{-1/2}X_{ij}(\epsilon_{j}+((I-A)\fv_{0})_{j})\;\;
\mbox{and}\;\;X_{ij}\;\mbox{is the}\;(j,i)-th\;\mbox{element
of}\;\Xv
\end{eqnarray*}
for $i=1,\ldots,d_{n}$. We next apply the Lindeberg's theorem to
$\sum_{j}W_{ij}$. It is easy to show that
$Var(\sum_{j}W_{ij})=\bR_{ii}\sigma^{2}+\bR_{ii}n^{-1}\sum_{j}[(I-A)\fv_{0}]_{j}^{2}$.
By (\ref{cwres0}), we have $Var(\sum_{j}W_{ij})\rightarrow
\bR_{ii}\sigma^{2}$. We next verify the  Liapounov's condition:
\begin{eqnarray*}
\sum_{j}E|W_{ij}|^{3}&=&n^{-3/2}E|X_{ij}|^{3}\sum_{j}E|\epsilon_{j}+[(I-A)\fv_{0}]_{j}|^{3}\\
&\leq&3n^{-3/2}\left[nE|\epsilon|^{3}+\sum_{j}|[(I-A)\fv_{0}]_{j}|^{3}\right]\rightarrow
0
\end{eqnarray*}
by the sub-exponential tail of $\epsilon$ and (\ref{i-a}). Then
the Lindeberg's theorem implies (\ref{inter4a}). As for
(\ref{inter2a}), we first write (\ref{inter2a}) as the sum of
$\bR_{ij}$, $[\Xv'\Xv/n]_{ij}-\bR_{ij}$ and $[-\Xv'A\Xv/n]_{ij}$.
By the central limit theorem, the second term in the above
decomposition is $O_{P}(n^{-1/2})$. For the last term, we have
$E\{(\Xv' A\Xv)_{ij}\}^{2}=$
\begin{eqnarray*}
(\bR_{ij})^{2}(tr(A))^{2}+(\bR_{ii}\bR_{jj}+(\bR_{ij})^{2})tr(A^{2})+
(E(X_{1i}X_{1j})^{2}-2(\bR_{ij})^{2}-\bR_{ii}\bR_{jj})\sum_{r}A_{rr}^{2}
\end{eqnarray*}
for $i\neq j$. When $i=j$, we have $E|(\Xv'
A\Xv)_{ii}|=\bR_{ii}tr(A)$. By considering (\ref{cwres}) we have
proved (\ref{inter2a}). (\ref{inter2a}) implies that
\begin{eqnarray}
\|\Xv'(I-A)\Xv/n-\bR\|=O_{P}(d_{n}n^{-1/2}\vee
d_{n}n^{-1}\lambda_{1}^{-1/2m}).\label{cova}
\end{eqnarray}
Thus (\ref{inter3a}) follows from the dimension condition D1.
\end{proof}


\vspace{-.2in}\begin{proof}[Proof of Lemma~\ref{intilemma}:]

Based on the definition on $\tilde{\btv}_{PS}$, we have the below
inequality:
\begin{eqnarray*}
\frac{1}{n}(\tilde{\btv}_{PS}-\btv_{0})'\Xv'(I-A)\Xv(\tilde{\btv}_{PS}-\btv_{0})-\frac{2}{n}(\tilde{\btv}_{PS}-\btv_{0})'\Xv'(I-A)
(\fv_{0}+\epsv)\leq 0.
\end{eqnarray*}
Let
$\delta_{n}=n^{-1/2}[\Xv'(I-A)\Xv]^{1/2}(\tilde{\btv}_{PS}-\btv_{0})$
and
$\omega_{n}=n^{-1/2}[\Xv'(I-A)\Xv]^{-1/2}\Xv'(I-A)(\fv_{0}+\epsv)$.
Then the above inequality can be rewritten as $
\|\delta_{n}\|^{2}-2\omega_{n}'\delta_{n}\leq 0$, i.e.
$\|\delta_{n}-\omega_{n}\|^{2}\leq\|\omega_{n}\|^{2}$. By
Cauchy-Schwartz inequality, we have $\|\delta_{n}\|^{2}\leq
2(\|\delta_{n}-\omega_{n}\|^{2}+\|\omega_{n}\|^{2})\leq4\|\omega_{n}\|^{2}$.
Examine $\|\omega_{n}\|^{2}=K_{1n}+K_{2n}+K_{3n}$, with
\begin{eqnarray*}
K_{1n}&=&n^{-1}\epsv'(I-A)\Xv[\Xv'(I-A)\Xv]^{-1}
\Xv'(I-A)\epsv\\
K_{2n}&=&2n^{-1}\epsv'(I-A)\Xv[\Xv'(I-A)\Xv]^{-1}
\Xv'(I-A)\fv_{0}(t)\\
K_{3n}&=&n^{-1}\fv_{0}(T)'(I-A)\Xv[\Xv'(I-A)\Xv]^{-1}\Xv'(I-A)
\fv_{0}(t).
\end{eqnarray*}
Applying (\ref{inter1a}), (\ref{inter4a}) and (\ref{inter2a}) to
the above three terms, we can conclude that all of them are of the
order $O_{P}(d_{n}n^{-1})$ by considering the matrix inequalities
(\ref{ineq1})-(\ref{ineq2}). Thus we have proved (\ref{inirate})
by considering (\ref{inter3a}). \end{proof}


\vspace{-.2in}\begin{proof}[Proof of Theorem~\ref{consithm}:]
The proof proceeds in several parts. First we show the rate convergence of the PSA parametric estimate,
i.e., (\ref{pararatei}). Second, we derive the rate of convergence for $\widehat{f}$.

Let
$\alpha_{n}=\sqrt{d_{n}/n}$. Similar as (\ref{le3}), we have
\begin{eqnarray}
Q(\btv_0+\alpha_n\sv)-Q(\btv_0)\geq
\alpha_{n}\sv'\dot{L}(\btv_{0})+\frac{1}{2}\sv'[\alpha_{n}^{2}\ddot{L}(\btv_{0})]\sv+
\lambda_{2}\sum_{j=1}^{q_{n}}\frac{|\beta_{0j}+\alpha_n
s_{j}|-|\beta_{0j}|}{|\tilde{\beta}_{j}|^{\gamma}},\label{old}
\end{eqnarray}
where the forms of $\dot{L}(\btv_{0})$ and $\ddot{L}(\btv_{0})$
are specified in the proof of Theorem~\ref{brate}. By considering
the lemma \ref{apple1}, (\ref{ineq1}) and (\ref{ineq2}) in the
appendix, we have
\begin{eqnarray}
\alpha_{n}\sv'\dot{L}(\btv_{0})&=&\|\sv\|O_{P}(d_{n}/n)\\
\frac{1}{2}\sv'[\alpha_{n}^{2}\ddot{L}(\btv_{0})]\sv&=&
(d_{n}/n)\sv'\bR\sv+O_P(d_{n}^{2}n^{-3/2}\vee
d_{n}^{2}n^{-2}\lambda_{1}^{-1/2m})
\end{eqnarray}
given any $\|\sv\|=C$ independent of $n$. Thus the first two terms
in the right hand side of (\ref{old}) are of the same order
$O_{P}(d_{n}/n)$ due to $d_n=o(n^{1/2}\wedge
n\lambda_1^{1/2m})$. The second term, which is positive, dominates
the first one by allowing sufficiently large $C$. The
last term is bounded by $\lambda_{2}\alpha_n\|\sv\|$. Thus, we
assume $\sqrt{n}\lambda_{2}/\sqrt{d_{n}}\rightarrow 0$ so that the
last term of (\ref{old}) is $o_{P}(d_{n}/n)$. This completes the proof of (\ref{pararatei}).

We next show the nonparametric rate for $\widehat{f}$ by using
similar analysis for the fixed dimensional case. Recall that
$g(x,t)=x'\btv+f(t)$. Similarly, we can show
$\|\widehat{g}-g_{0}\|_{n}=O_{P}(1)$. Combining the fact that
$\|g_{0}\|_{\infty}=O_{P}(q_{n})$, we have
$\|\widehat{g}\|_{n}=O_{P}(q_{n})$. By assuming that
$\lambda_{min}(\sum_{k}\phi_{k}\phi_{k}'/n)\geq c_{3}>0$, we can
obtain
\begin{eqnarray*}
\frac{\|\hat{g}\|_{\infty}}{1+J_{\hat{g}}}=O_{P}\left(\frac{q_{n}}{1+J_{\hat{g}}}\right)
\end{eqnarray*}
by similar analysis. Thus, by applying Theorem 2.2 in
\citet{mvg97}, we have established the below inequalities:
\begin{eqnarray}
\lambda_{1}J_{\hat{f}}^{2}&\leq&
\left[\|\hat{g}-g_{0}\|_{n}^{1-1/2m}(1+J_{\hat{f}})^{1/2m}q_{n}^
{1/2m}\vee(1+J_{\hat{f}})
q_{n}n^{-\frac{2m-1}{2(2m+1)}}\right]O_{P}(n^{-1/2})\nonumber
\\&&+\lambda_{1}J_{f_{0}}^{2}+\lambda_{2}(J_{\btv_{0}}-
J_{\widehat{\btv}}),\label{ratein0}
\end{eqnarray}
\begin{eqnarray}
\|\hat{g}-g_{0}\|_{n}^{2}&\leq&
\left[\|\hat{g}-g_{0}\|_{n}^{1-1/2m}(1+J_{\hat{f}})^{1/2m}
q_{n}^{1/2m}\vee(1+J_{\hat{f}})q_{n}
n^{-\frac{2m-1}{2(2m+1)}}\right]O_{P}(n^{-1/2})
\nonumber\\&&+\lambda_{1}J_{f_{0}}^{2}+\lambda_{2}
(J_{\btv_{0}}-J_{\widehat{\btv}}).\label{ratein2}
\end{eqnarray}
Let $a_{n}=\|\widehat{g}-g_{0}\|_{n}/[(1+J_{\hat{f}})q_{n}]$, then from $(1+J_{\hat{f}})q_n\ge1$,
(\ref{ratein2}) becomes
\begin{eqnarray}
a_{n}^{2}&\le& a_n^2(1+J_{\hat{f}})q_n\nonumber\\
&\leq& O_{P}(n^{-1/2})a_{n}^{1-1/2m}\vee
O_{P}(n^{-2m/(2m+1)})\vee
O_{P}(\lambda_{1}/q_{n})\vee\frac{\lambda_{2}(J_{\btv_{0}}
-J_{\widehat{\btv}})}{q_{n}}
\nonumber\\
&\leq&O_{P}(n^{-1/2})a_{n}^{1-1/2m}\vee O_{P}(n^{-2m/(2m+1)})
\vee\frac{\lambda_{2}(J_{\btv_{0}}-J_{\widehat{\btv}})}
{q_{n}}\nonumber\\
&\leq&O_{P}(n^{-1/2})a_{n}^{1-1/2m}\vee
O_{P}(n^{-2m/(2m+1)}).\label{rateine1}
\end{eqnarray}
In view of the condition $\lambda_{1}/q_{n}\asymp n^{-2m/(2m+1)}$,
the second inequality in the above follows. The last inequality
follows from the below analysis. Note that
\begin{eqnarray*}
\frac{\lambda_{2}(J_{\btv_{0}}-J_{\widehat{\btv}})}{q_{n}}&\leq&
\left(\lambda_{2}\sum_{j=1}^{q_{n}}\frac{|\beta_{0j}-\hat{\beta}_{j}|}
{|\tilde{\beta}_{j}|^{\gamma}}+
\lambda_{2}\sum_{j=q_{n}+1}^{d_{n}}\frac{|\beta_{0j}-\hat{\beta}_{j}|}{|\tilde{\beta}_{j}|^{\gamma}}\right)q_{n}^{-1}
\nonumber\\
&\aplt&\left(\lambda_{2}\sum_{j=1}^{q_n}|\beta_{0j}-
\widehat{\beta}_j|+\max_{j=q_{n}+1,\ldots,d_{n}}\frac{\lambda_{2}}
{|\tilde{\beta}_{j}|^{\gamma}}
\sum_{j=q_n+1}^{d_n}|\beta_{0j}-\widehat{\beta}_j|
\right)q_{n}^{-1}\\
&\aplt&\left[\max_{j=q_{n}+1,\ldots,d_{n}}\frac{\lambda_{2}/q_n}
{|\tilde{\beta}_{j}|^{\gamma}}\right]O_{P}(\sqrt{d_n/n})\sqrt{d_n}\\
&=&O_P(n^{1/(2m+1)}d_n^{-3/2}\sqrt{d_n/n})\cdot O_{P}(\sqrt{d_n/n})\sqrt{d_n}\\
&=&O_P(n^{-2m/(2m+1)})
\end{eqnarray*}
since $\|\widehat{\btv}-\btv_{0}\|=O_{P}(\sqrt{d_{n}/n})$ and
(\ref{l2ratei2}). Therefore (\ref{rateine1}) implies that
$a_{n}=O_{P}(n^{-m/(2m+1)})$. We next analyze (\ref{ratein0})
which can be rewritten as
\begin{eqnarray*}
\frac{\lambda_{1}}{q_{n}}(J_{\hat{f}}-1)&\leq&
O_{P}(n^{-1/2})a_{n}^{1-1/2m}\vee
O_{P}(n^{-2m/(2m+1)})\\
(J_{\hat{f}}-1)&\leq&\frac{q_{n}}{\lambda_{1}}O_{P}(n^{-2m/(2m+1)})\\
J_{\hat{f}}&\leq&O_{P}(1).
\end{eqnarray*}
in view of the condition that $\lambda_{1}/q_{n}\asymp
n^{2m/(2m+1)}$. Finally, we have proved that
$\|\hat{g}-g_{0}\|_{n}=O_{P}(n^{-m/(2m+1)}q_{n})$. Combining the
triangle inequality and
$\|\widehat{\btv}-\btv_{0}\|=O_{P}(\sqrt{d_{n}/n})$, we complete
the whole proof of (\ref{nonpararatei}). \end{proof}

\vspace{-.2in}\begin{proof}[Proof of Theorem~\ref{asynorthm}:]


Proof of part (a) is similar as that in the fixed dimension
case, i.e. 3(a) in Theorem~\ref{brate}. It follows from the
regular condition $\lambda_{1}/q_{n}\asymp n^{-2m/(2m+1)}$,
Lemma~\ref{apple1} and assumption~(\ref{l2ratei1}).

We next prove the asymptotic normality of $\widehat{\btv}_{1}$.
Similar as the proof for 3(b) in Theorem~\ref{brate}, we
can establish that
\begin{eqnarray}
\widehat{\btv}_{1}-\btv_{10}=\left[\Xv_{1}'(I-A)\Xv_{1}\right]^{-1}\left[\Xv_{1}'(I-A)(\fv_{0}(t)+\epsv)-
\frac{n\lambda_{2}}{2}Pe(\widehat{\btv}_{1})\right],\label{disequ}
\end{eqnarray}
where $Pe(\widehat{\btv}_{1})=(sign(\widehat{\beta}_{1})/
|\tilde{\beta}_{1}|^{\gamma},\ldots,sign(\widehat{\beta}_{q_n})/
|\tilde{\beta}_{q_n}|^{\gamma})'$. Note that the invertibility of
$\Xv_{1}(I-A)\Xv_{1}$ follows from (\ref{inter3a}) and the
asymptotic invertibility of $\bR$, i.e. the condition R3D. Thus,
we have
\begin{eqnarray}
&&\sqrt{n}\Gv_{n}\bR_{11}^{-1/2}(\Xv_{1}'(I-A)\Xv_{1}/n)(\widehat{\btv}_{1}-\btv_{10})\label{interdis}\\
=&&\sqrt{n}\Gv_{n}\bR_{11}^{-1/2}\left[\frac{\Xv_{1}'(I-A)(\fv_{0}(t)+\epsv)}{n}-
\frac{\lambda_{2}}{2}Pe(\widehat{\btv}_{1})\right]\nonumber\\
=&&M_{1n}+M_{2n}+M_{3n},\nonumber
\end{eqnarray}
where
\begin{eqnarray*}
M_{1n}&=&n^{-1/2}\Gv_{n}\bR_{11}^{-1/2}\Xv_{1}'[(I-A)\fv_{0}(t)+
\epsv],\\
M_{2n}&=&-n^{-1/2}\Gv_{n}\bR_{11}^{-1/2}\Xv_{1}'A\epsv,\\
M_{3n}&=&-(\sqrt{n}\lambda_{2}/2)\Gv_{n}\bR_{11}^{-1/2}Pe(
\widehat{\btv}_{1}).
\end{eqnarray*}

In order to derive the asymptotic distribution of
$M_{1n}+M_{2n}+M_{3n}$, we apply the Cramer-Wold device. Let $\vv$
be a $l$-vector. We first show that $\vv'M_{2n}=o_{P}(1)$ and
$\vv'M_{3n}=o_{P}(1)$. It is easy to show
\begin{eqnarray*}
|\vv'M_{2n}|&&\leq
n^{-1/2}\|\vv\|\|\Gv_{n}\bR_{11}^{-1/2}\Xv_{1}'A\epsv\|\leq
(n\lambda_{min}(\bR_{11}))^{-1/2}\|\vv\|\|\Gv_{n}\Xv_{1}'A\epsv\|\\
&&\leq O_{P}(n^{-1/2}\sqrt{q_{n}}\lambda_{1}^{-1/4m})=o_{P}(1).
\end{eqnarray*}
The last inequality follows from $\Gv_{n}\Gv_{n}'\rightarrow\Gv$
and (\ref{inter1a}). The conditions that $\lambda_{1}/q_{n}\asymp
n^{-2m/(2m+1)}$ and $n^{m/(2m+1)}\lambda_{1}\rightarrow 0$ imply
its convergence to zero. As for $\vv'M_{3n}$, we have
\begin{eqnarray*}
|\vv'M_{3n}|&&\leq\frac{\sqrt{n}\lambda_{2}}{2}\|\vv\|\|\Gv_{n}
\bR_{11}^{-1/2}Pe(\widehat{\btv}_{1})\|\leq
O_{P}(\sqrt{n}\lambda_{2})\|\Gv_{n}Pe(\widehat{\btv}_{1})\|\leq
O_{P}(\sqrt{n}\lambda_{2}\sqrt{q_{n}})=o_{P}(1)
\end{eqnarray*}
by the stated condition $q_{n}=o(n^{-1}\lambda_{2}^{-2})$.

As for $\vv'M_{1n}$, we can rewrite it as
\begin{eqnarray*}
\vv'M_{1n}=\sum_{j=1}^{n}n^{-1/2}\vv'\Gv_{n}\bR_{11}^{-1/2}\wv_{j}[(I-A)\fv_{0}(t)+\epsv
]_{j}\equiv\sum_{j=1}^{n}T_{j}.
\end{eqnarray*}
and apply Lindeberg's theorem (Theorem 1.15 in
\citet{s07}) to show its asymptotic distribution. First,
\begin{eqnarray}
Var(\sum_{j}T_{j})=\sum_{j}Var(T_{j})=\vv'\Gv_{n}\Gv_{n}'\vv(\sigma^{2}+n^{-1}\sum_{l=1}^{n}((I-A)\fv_{0})_{l}^{2})
\rightarrow\sigma^{2}\vv'\Gv\vv\label{inter5a}
\end{eqnarray}
by $\Gv_{n}\Gv_{n}'\rightarrow\Gv$ and (\ref{i-a}). We next verify
the condition that
\begin{eqnarray*}
\sum_{j=1}^{n}E(T_{j}^{2}I\{|T_{j}|>\delta\sigma\sqrt{\vv'\Gv\vv}\})
=o(\sigma^{2}\vv'\Gv\vv)
\end{eqnarray*}
for any $\delta>0$. Note that
\begin{eqnarray*}
\sum_{j=1}^{n}E(T_{j}^{2}I\{|T_{j}|>\delta\sigma\sqrt{\vv'\Gv\vv}\})&\leq&\sum_{j=1}^{n}(ET_{j}^{4})^{1/2}
(P(|T_{j}|>\delta\sigma\sqrt{\vv'\Gv\vv}))^{1/2}\\
&\leq&\left(\sum_{j=1}^{n}ET_{j}^{4}\right)^{1/2}\left(\sum_{j=1}^{n}P(|T_{j}|>\delta\sigma\sqrt{\vv'\Gv\vv})
\right)^{1/2}.
\end{eqnarray*}
In view of (\ref{inter5a}), we obtain
\begin{eqnarray*}
\sum_{j=1}^{n}P(|T_{j}|>\delta\sigma\sqrt{\vv'\Gv\vv})\leq\frac{\sum_{j=1}^{n}ET_{j}^{2}}{\delta^{2}\sigma^{2}\vv'G\vv}
\rightarrow\frac{1}{\delta^{2}}
\end{eqnarray*}
and
\begin{eqnarray*}
\sum_{j=1}^{n}ET_{j}^{4}&\leq&\frac{\|\vv\|^{4}\sum_{j=1}^{n}
E\|\Gv_{n}\bR_{11}^{-1/2}\wv_{j}\|^{4}E[(I-A)\fv_{0}+\epsv]
^{4}_{j}} {n^{2}}\\
&\leq&\frac{8\|\vv\|^{4}\sum_{j=1}^{n}E\|\Gv_{n}\bR_{11}^{-1/2}\wv_{j}\|^{4}([(I-A)\fv_{0}]_{j}^{4}+E\epsilon^{4})}{n^{2}}.
\end{eqnarray*}
Note that
\begin{eqnarray*}
E\|\Gv_{n}\bR_{11}^{-1/2}\wv_{j}\|^{4}\leq
lq_{n}^{2}\lambda_{min}^{-2}(\bR_{11})\sum_{i=1}^{l}\|g_{i}\|^{4}=O(q_{n}^{2}),
\end{eqnarray*}
where $\Gv_{n}'=(g_{1},\ldots,g_{l})$, due to
$\Gv_{n}\Gv_{n}'\rightarrow\Gv$. Combined with the above analysis
we have $\sum_{j}ET_{j}^{4}=O(q_{n}^{2}\lambda_{1}^{2}\vee
q_{n}^{2}n^{-1})$ given the sub-exponential tail of $\epsilon$ and
(\ref{i-a}). By the conditions that $q_{n}\leq d_{n}=o(n^{1/3})$
and $\lambda_{1}/q_{n}\asymp n^{-2m/(2m+1)}$, we have verified the
condition that
$\sum_{j=1}^{n}E(T_{j}^{2}I\{|T_{j}|>\delta\sigma\sqrt
{\vv'\Gv\vv}\})=o(\sigma^{2}\vv'\Gv\vv)$. Therefore, we have
proved that $(\ref{interdis})=N(0,\sigma^{2}\Gv)+o_{P}(1)$.

Then we have
\begin{eqnarray}
\sqrt{n}\Gv_{n}\bR_{11}^{1/2}(\widehat{\btv}_{1}-\btv_{10})&=&\sqrt{n}\Gv_{n}\bR_{11}^{-1/2}
\b(\bR_{11}-\Xv_{1}'(I-A)\Xv_{1}/n)(\widehat{\btv}_{1}-\btv_{10})+
N(0,\sigma^{2}\Gv)+o_{P}(1)\nonumber\\
&=&N(0,\sigma^{2}\Gv)+o(1)+O_{P}(d_{n}^{3/2}n^{-1/2}\vee
d_{n}^{3/2}n^{-1}\lambda_{1}^{-1/2m})\label{final}
\end{eqnarray}
by the matrix inequality, (\ref{cova}) and (\ref{pararatei}). The
stated condition $d_n=o(n^{1/3}\wedge n^{2/3}\lambda_{1}^{1/3m})$
implies that the rest term in (\ref{final}) is $o_{P}(1)$. This
completes the proof of (\ref{asydisti}). \end{proof}

\vskip -1cm

\end{document}